\documentclass[journal]{IEEEtran}

\usepackage{booktabs}
\usepackage{siunitx}
\usepackage{graphicx}
\usepackage{cite}
\usepackage{subcaption}
\usepackage{amsmath,amssymb,amsfonts}
\usepackage{fancyhdr}
\usepackage{algorithmic}
\usepackage{authblk}
\usepackage[ruled,linesnumbered,lined,boxed,commentsnumbered]{algorithm2e}
\usepackage{graphicx}
\usepackage{hyperref}
\usepackage{textcomp}
\usepackage{enumitem}
\usepackage{pifont}
\usepackage{commath}
\usepackage{xcolor}
\usepackage{upgreek}
\usepackage{multirow}
\usepackage{commath}
\usepackage{makecell}
\usepackage{threeparttable}
\usepackage{bm}
\newtheorem{definition}{Definition}
\newtheorem{theorem}{Theorem}
\newtheorem{proof}{Proof}

\usepackage{booktabs}

\usepackage{tikz}
\newcommand{\solidcircle}{\tikz\filldraw[black] (0,0) circle (2.5pt);}

\newcommand{\hollowcircle}{\tikz\draw[black] (0,0) circle (2.5pt);}

\newcommand{\halfcircle}{%
  \tikz[baseline=-0.5ex]{%
    \fill (0,0) -- (90:2.5pt) arc (90:270:2.5pt) -- cycle; 
    \draw (0,0) circle (2.5pt); 
  }%
}

\SetCommentSty{mycommfont}

\def\BibTeX{{\rm B\kern-.05em{\sc i\kern-.025em b}\kern-.08em
    T\kern-.1667em\lower.7ex\hbox{E}\kern-.125emX}}

\begin{document}
%
\title{Pacifo: An Efficient Fully Homomorphic Encryption-based Lightweight ResNet for Trustworthy Machine Learning as a Service}

\author{Yang Li, Gang He, Ruichen Zhang, Guangyuan Liu, and Yinqiu Liu
\thanks{Y. Li and G. Liu are with the College of Computing and Data Science, the Energy Research Institute @ NTU, Interdisciplinary Graduate Program, Nanyang Technological University, Singapore (e-mail: yang048@e.ntu.edu.sg; liug0022@e.ntu.edu.sg).
} 
\thanks{G. He is with the School of Communication and Information Engineering, Shanghai University,
Shanghai, China (e-mail: hegang@shu.edu.cn).
} 
\thanks{R. Zhang, and Y. Liu are with the College
of Computing and Data Science, Nanyang Technological University,
Singapore (e-mails: ruichen.zhang@ntu.edu.sg; yinqiu001@e.ntu.edu.sg)}
}
\maketitle

\begin{abstract}
Machine Learning as a Service (MLaaS) has become a popular paradigm for deploying AI applications. However, it raises significant privacy concerns that the raw client data needs to be uploaded to the server for inference services. 
Private inference using Fully Homomorphic Encryption (FHE) offers a solution by allowing computations directly on encrypted data, with only the final results decrypted on a client side. However, current practical implementations still suffer from large computational overhead and high latency due to the cost of FHE operations.
In this paper, we propose Pacifo, a framework that adapts customized fine-tuning strategies to efficiently evaluate the ResNet20 model in the context of FHE. Specifically, Pacifo introduces a public-private layer partitioning scheme and an efficient ReLU approximation, significantly reducing the number of private layers and costly FHE operations during inference. It also offers flexible security configurations to accommodate varying client privacy requirements. Compared to state-of-the-art FHE implementations of ResNet20, Pacifo achieves a speedup of $3.7-21.7\times$ in runtime while effectively preserving $95\%-99\%$ of the original plaintext model accuracy across downstream tasks. Pacifo is open-sourced at \url{https://github.com/BryceYangLi/Pacifo}.

\end{abstract}

\begin{IEEEkeywords}
Privacy-preserving machine learning, fully homomorphic encryption, private inference, ResNet20, fine-tuning
\end{IEEEkeywords}

%
\IEEEpeerreviewmaketitle

\section{Introduction}

In recent years, deep learning-based models have shown remarkable capabilities across a wide range of tasks, including image recognition~\cite{devi2021review,hasan2024image}, sentiment analysis~\cite{alslaity2024machine}, and content generation~\cite{lu2023machine}. In this case, \textit{Machine Learning as a Service} (MLaaS)~\cite{ribeiro2015mlaas,philipp2020machine} has emerged as a popular business use case to facilitate the development of diverse AI applications. Fig.~\ref{fig:MLaaS} presents a typical MLaaS workflow: the well-trained model is often deployed on the remote server, and the clients need to upload their data to the server and receive the inference result in return. This setup allows clients to avoid the cost of training computationally intensive models locally, while the server retains ownership of its valuable model.

However, the data handling process in MLaaS could raise significant privacy concerns. Specifically, MLaaS often operates through an online platform or APIs, requiring clients to transmit input data to the server.
This reliance could lead to potential misuse of private data, such as unauthorized processing or resale to third parties. Even when the server is considered trustworthy, clients with sensitive data (e.g., medical records or financial information) may be unwilling to grant direct access to their data. These concerns have triggered responses scuh as Italy’s temporary ban of ChatGPT~\cite{mauran2023whoops,lomas2023italy}. Hence, different private outsourced computation techniques have been applied to MLaaS to mitigate data privacy concerns, e.g., Differential Privacy (DP)~\cite{dwork2014algorithmic}, secure Multi-Party Computation (MPC)~\cite{yao1982protocols}, and Homomorphic Encryption (HE)~\cite{gentry2009fully}. These approaches belong to the broader field of Privacy-Preserving Machine Learning (PPML) and focus on \textit{private inference} (PI), which means performing model inference service without exposing client data~\cite{li2024survey}.

\textbf{FHE for Private Inference.}
Among existing solutions, FHE-based approaches are considered highly promising. In this setting, clients encrypt their private inputs before transmission, and only decrypt the final inference results returned by the server, ensuring that the server gains no knowledge of the input~\cite{podschwadt2022survey}. The privacy guarantees of FHE are grounded in the hardness of well-established cryptographic assumptions, i.e., the Ring Learning With Errors (RLWE) problem~\cite{lyubashevsky2010ideal}.
In addition, FHE-based protocols are usually \textit{non-interactive}, which means no additional communication is required between the client and server during the inference process. Thereafter, there are several studies
on HE-based private inference~\cite{gilad2016cryptonets,chou2018faster,jiang2018secure,dathathri2019chet,boemer2019ngraph,lou2021hemet,park2022aespa,rovida2024transformer,zhang2024secure,moon2024thor}, with recent works~\cite{lee2022privacy,lee2022low,benamira2023tt,kim2023optimized,rovida2024encrypted} targeting deep and wide Convolutional Neural Networks (CNNs). Specifically, CNNs excel at capturing two-dimensional spatial features, making them effective for a wide range of image processing tasks and a popular choice for private inference~\cite{he2025epcnn}.



\begin{figure}[t]
    \centering
\includegraphics[width=0.85\linewidth]{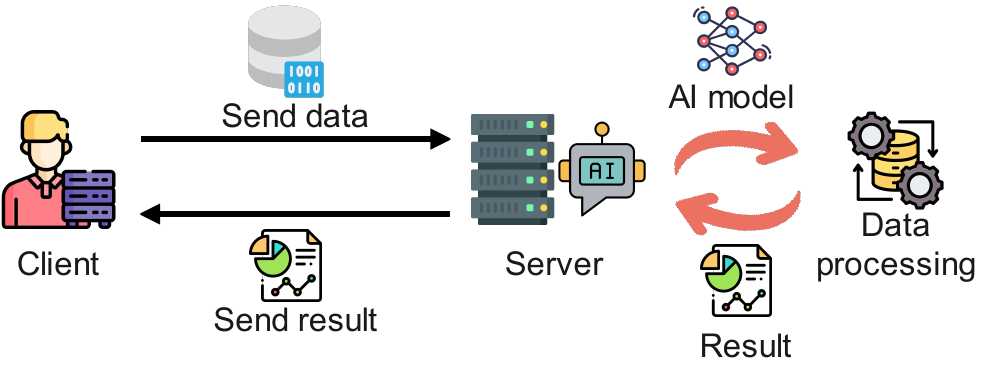}
    \caption{MLaaS without privacy protection}
    \label{fig:MLaaS}
\end{figure}

\textbf{Challenges in FHE-Based CNN Inference.} Most existing HE-based solutions~\cite{lee2022privacy,lee2022low,benamira2023tt,kim2023optimized,rovida2024encrypted} utilize the Cheon-Kim-Kim-Song (CKKS) scheme~\cite{cheon2017homomorphic} 
(except Benamira~\emph{et~al.}\cite{benamira2023tt} 
with TFHE~\cite{chillotti2016faster}). CKKS supports approximate arithmetic over real and complex numbers, making it well-suited for PPML. Nonetheless, current implementations on CNNs still face the following challenges:
\begin{enumerate}
    \item Advanced CNNs such as ResNet involve many convolutions and non-linear activations, which require a large number of expensive homomorphic computations and customized polynomial approximations in the context of FHE. This results in significant computational overhead and high runtime costs. For instance, Kim~\emph{et~al.}~\cite{kim2023optimized} reported a runtime of $368$ seconds to evaluate a single CIFAR-10 image using ResNet20.

    \item CNNs often have deep architectures—for example, even the lightweight ResNet20 consists of 20 network layers. Designing an FHE circuit to evaluate the entire model typically requires frequent and costly \textit{bootstrapping} operations to manage noise during computation. In Lee~\emph{etal.}\cite{lee2022low} and Kim~\emph{etal.}\cite{kim2023optimized}, bootstrapping accounts for 72.6\% and 82.3\% of the total runtime, respectively.
    \item Existing solutions~\cite{lee2022privacy,lee2022low,benamira2023tt,kim2023optimized,rovida2024encrypted} typically adopt a fixed security standard in HE (e.g., $128$-bit), which is reasonable but lacks flexibility. We believe that the security levels should align with different application scenarios. Highly sensitive data, e.g., government communications, requires stronger security (e.g., $256$-bit) to counter potential quantum threats.
\end{enumerate}
To address challenges 1) and 2), prior studies have targeted computationally expensive operations, e.g., using single instruction multiple data (SIMD) to accelerate the linear computations~\cite{lee2022privacy,lee2022low,kim2023optimized} and proposing HE-friendly approximations for non-linear operations~\cite{rovida2024encrypted}. 
They have also optimized the design and placement of bootstrapping to further reduce overhead~\cite{lee2021high}.
However, despite significant advancements in efficiency, these protocols remain impractical for real-world applications due to high runtime costs.  Besides, none of them considers challenge 3) for further optimizing security standards. Given these limitations, we are motivated to re-consider how to design an efficient and practical private inference framework from a higher level, while ensuring the model performance.


\textbf{Our Solutions.} 
In this paper, we present \textit{Pacifo}, a client-server private inference framework that minimizes FHE overhead through high-level circuit design. Pacifo leverages open-source pretrained models and strategically modifies the fine-tuning process to shift most of the computations outside the FHE domain without compromising security. As a result, the associated computational cost becomes relatively negligible. Furthermore, it introduces aggressive ReLU approximations and incorporates knowledge distillation (KD) during fine-tuning to preserve accuracy. Additionally, Pacifo supports flexible security levels, allowing clients to balance efficiency and security standards based on their specific requirements.

\subsection{Contributions}
We customize Pacifo for the popular ResNet20 model (See Section~\ref{Subsec:resnet20}), which is widely used for image classification tasks due to its lightweight architecture and effective use of residual connections that mitigate vanishing gradients.
Our main contributions can be summarized as follows:
\begin{enumerate}
    \item \textbf{Layer Partitioning for Efficient Private Inference.} 
    We propose a layer-freezing-based fine-tuning strategy that partitions ResNet20 into public and private layers. Clients process inputs through public layers to generate intermediate features, which are encrypted and sent to the server. The server computes on the encrypted features using private layers and returns the encrypted output for the client to decrypt. Adjusting the partition offers a flexible trade-off between performance and efficiency.
    \item \textbf{HE-friendly Approximation for ReLU Activations.} 
    We present a highly efficient approximation for the expensive non-linear ReLU function. Unlike prior work, we trade partial functional accuracy for significantly reduced computational overhead, consuming only two multiplicative depths in HE. By leveraging knowledge distillation, we experimentally show that this simplification results in only minimal performance degradation.
    \item \textbf{Customized Security Standards.} 
    We have also investigated the relationship between security and HE configurations using the Learning With Errors (LWE)-estimator, enabling clients to choose different security levels based on privacy requirements. For instance, clients with higher privacy needs can opt for stronger security (e.g., 192-bit security) to effectively defend against potential quantum attacks. 
    \item \textbf{An End-to-End Implementation using FHE.} 
    We also provide a full-fledged implementation of ResNet20 using the \textit{residue number system} (RNS) variant of CKKS, i.e., RNS-CKKS, and evaluate its performance, including runtime, communication overhead, and model accuracy. Our approach could achieve a runtime of 16.9 seconds and 10.5 MB of communication—over 20× faster than previous studies~\cite{kim2023optimized,rovida2024encrypted}, while maintaining accuracy comparable to the plaintext model. Our source codes are also available at \url{https://github.com/BryceYangLi/Pacifo}.
\end{enumerate}

\subsection{Organization} 
We summarize the related work and present the preliminaries in Section~\ref{Sec:related_work} and Section~\ref{Sec:preliminary}, respectively. The basic design for an HE-based Resnet20 circuit is presented in Section~\ref{Sec:basic_design}. Our fine-tuning and layer partition strategy is given in Section~\ref{Sec:layer_freezing}. Section~\ref{Sec:security_standards} discusses the customization of different security standards. Section~\ref{Sec:security_anlaysis} analyzes the security of our proposed framework. Finally, we demonstrate the experimental results in Section~\ref{Sec:experiments} and conclude the paper in Section~\ref{Sec:conclusion}.

\section{Related Work}\label{Sec:related_work}

This section reviews prior work on private inference targeting on CNNs. In general, Table~\ref{tab:related_work} and Fig.~\ref{fig:comparison} provide a summarization of the most related works compared to our proposed method. After that, we discuss these works in detail according to their underlying techniques.

\begin{table*}[ht]
\centering
\begin{threeparttable}
  \caption{Comparison of privacy-preserving schemes for private neural network inference.}
  \label{tab:related_work}
  \centering
  \begin{tabular}{l|cccccccc}
    \hline
    \textbf{Framework} & \makecell{\textbf{Cryptographic}\\\textbf{Method}} & \makecell{\textbf{Threat}\\\textbf{Model}} & \makecell{\textbf{Data}\\\textbf{Privacy}} & \makecell{\textbf{Model}\\\textbf{Privacy}} & \makecell{\textbf{Output}\\\textbf{Privacy}} & \makecell{\textbf{Non-}\\\textbf{Interactivity}}& \makecell{\textbf{Computation}\\\textbf{Efficiency}} & \makecell{\textbf{Communication}\\\textbf{Efficiency}} \\\hline
    CryptoNets~\cite{gilad2016cryptonets} & HE & Semi-honest & $\solidcircle$ &  $\hollowcircle$ & $\solidcircle$ & $\solidcircle$ & Low & Low \\
    Gazelle~\cite{juvekar2018gazelle} & HE,GC,SS & Semi-honest & $\solidcircle$ &  $\hollowcircle$ & $\solidcircle$ & $\hollowcircle$ & High & Low \\
    MUSE~\cite{lehmkuhl2021muse} & HE,GC,SS & Semi-honest & $\solidcircle$ &  $\solidcircle$ & $\solidcircle$ & $\hollowcircle$ & Medium & Low \\
    Lee~\emph{et~al.}~\cite{lee2022privacy} & HE & Semi-honest & $\solidcircle$ &  $\solidcircle$ & $\solidcircle$ & $\solidcircle$ & Low & High \\
    Lee~\emph{et~al.}~\cite{lee2022low} & HE & Semi-honest & $\solidcircle$ &  $\solidcircle$ & $\solidcircle$ & $\solidcircle$ & Low & High \\
    Sonic~\cite{liu2022securely} & SS & Honest-majority & $\halfcircle$ &  $\halfcircle$ &  $\halfcircle$ & $\hollowcircle$ & Medium & High \\
    Kim~\emph{et~al.}~\cite{kim2023optimized} & HE & Semi-honest & $\solidcircle$ &  $\solidcircle$ &  $\solidcircle$ & $\solidcircle$ & Medium & High \\
    Pio~\cite{yang2023efficient} & HE,GC,SS & Honest-majority &  $\halfcircle$ &  $\solidcircle$ &  $\halfcircle$ & $\hollowcircle$ & High & Medium \\
    Zhao~\emph{et~al.}~\cite{zhao2024pratical} & SS & Honest-majority &  $\halfcircle$ &  $\halfcircle$ &  $\halfcircle$ & $\hollowcircle$ & Medium & Medium \\
    Rovida~\emph{et~al.}~\cite{rovida2024encrypted} & HE & Semi-honest & $\solidcircle$ & $\solidcircle$ & $\solidcircle$ & $\solidcircle$ & Medium & High \\
    EPCNN~\cite{he2025epcnn} & HE & Semi-honest & $\solidcircle$ & $\solidcircle$ & $\solidcircle$ & $\hollowcircle$ & Medium & Low \\
    Pacifo (proposed) & HE & Semi-honest & $\solidcircle$ & $\solidcircle$ & $\solidcircle$ & $\solidcircle$ & High & High \\
    \bottomrule
  \end{tabular}
  \begin{tablenotes}[flushleft]
    \footnotesize
    \item[1] HE: Homomorphic Encryption; GC: Garbled Circuits; SS: Secret Sharing.
    \item[2] Symbols: $\solidcircle$ conditions met; $\halfcircle$ partially met; $\hollowcircle$ not met.
    \item[3] Efficiency levels are reported relative to our proposed Pacifo method.
  \end{tablenotes}
  \end{threeparttable}
\end{table*}

\subsection{HE-only Private Inference}

We first present private inference studies based on HE schemes only, as they are closely related to our work. HE-only methods typically operates in a \emph{non-interactive} manner: the client encrypts inputs, the server evaluates the network under encryption, and the client decrypts the result.

Early HE-only approaches such as CryptoNets~\cite{gilad2016cryptonets} and HCNN~\cite{al2020towards} demonstrated the feasibility of evaluating neural networks over encrypted data, but they lacked support for privacy-preserving nonlinear activations. Consequently, they either delegated nonlinear computations to the client or employed HE-friendly polynomial approximations, which compromised model privacy or reduced accuracy. Lee~\emph{et~al.}~\cite{lee2022privacy} employed more advanced RNS-CKKS~\cite{cheon2018full} and polynomial approximation methods to implement a ResNet20 model, achieving high accuracy comparable to the plaintext model at the cost of considerable computational overhead. They further optimized their bootstrapping design and HE parameter configurations in a subsequent work~\cite{lee2022low}, but it still requires half an hour of computations for a single inference. To improve the computational efficiency, Kim~\emph{et~al.}~\cite{kim2023optimized} optimized the encoding process in HE and enabled more efficient convolutions, reducing the per-inference time to roughly six minutes. Rovida~\emph{et~al.}~\cite{rovida2024encrypted} focused on the memory
usage and proposed an optimized vector encoding strategy, significantly decreasing the number of stored rotation keys required by HE. Notably, They allows the evaluation of an encrypted ResNet20 in around five minutes on a laptop by using 15GB of memory.
EPCNN~\cite{he2025epcnn} presented an interactive private inference framework and allowed the client to participate in evaluating non-linear activations and judge the sign of blinded convolution results. Their approach can improve efficiency with minimal client-server interaction.

Different from these HE-only studies, our proposed Pacifo shifts emphasis to \emph{high-level} circuit design for improving efficiency: it uses a layer-freezing-based fine-tuning strategy to adaptively trims the portion of the model evaluated under HE, thereby fundamentally reducing the expensive computations in the context of HE. 

\begin{figure}[t]
    \centering
\includegraphics[width=.9\linewidth]{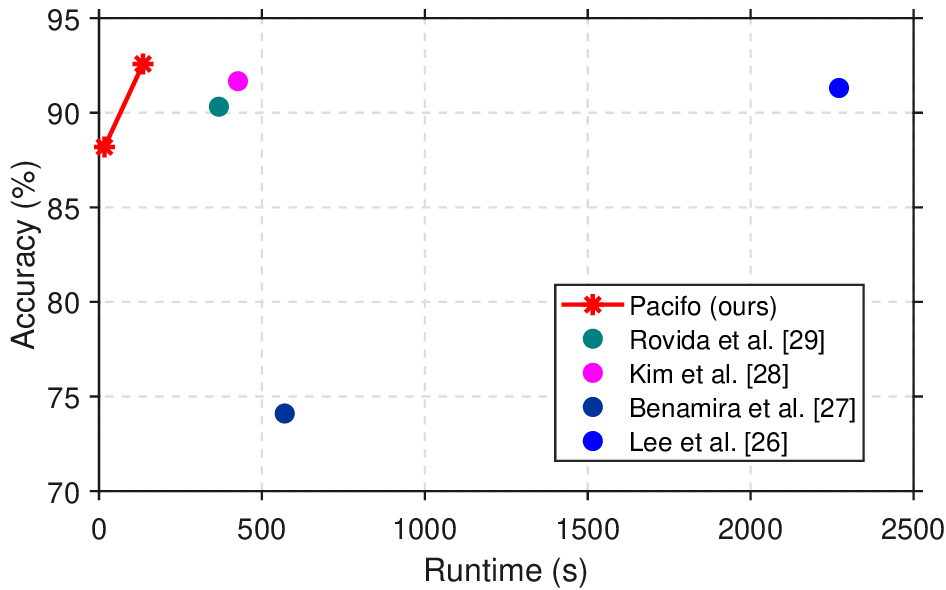}
    \caption{Performance comparison between existing HE-based methods and ours on ResNet20 using the CIFAR-10 dataset.}
    \label{fig:comparison}
\end{figure}

\subsection{Alternative Approaches to Private Inference}

Beyond HE-only solutions, some recent studies~\cite{juvekar2018gazelle,lehmkuhl2021muse,liu2022securely,yang2023efficient,zhao2024pratical} using alternative techniques, e.g., Garbled Circuit (GC) and Secret Sharing (SS), for private inference. These approaches are typically \emph{interactive}, requiring the server and client to collaborate online during evaluation.

Sonic~\cite{liu2022securely} adopted an SS-based two-server setting, secret-sharing both user inputs and model parameters and assuming the two servers do not collude (honest-majority). 
This removes the need for the user to stay online during inference process, but the security and correctness are fragile to server collusion or compromise. Zhao~\emph{et~al.}~\cite{zhao2024pratical} also used an SS-based dual-server setting. They optimized the secure convolution protocol, and further designed several lightweight secure building blocks, hence effectively improving the communication efficiency.
Besides, some studies proposed hybrid frameworks to combine the strengths of different schemes. For instance, Gazelle~\cite{juvekar2018gazelle} used HE for the linear layers of CNN and employed GC for the non-linear ones, while the communication between the client and server is supported by SS. This brings Gazelle a high computation efficiency in a typical single-server setting with limited communicatio costs. Similarly, MUSE~\cite{lehmkuhl2021muse} combined HE,GC and SS and designed an efficient two-party protocol resilient to malicious clients. It enables efficient switching between authenticated additive shares and GC labels via conditional disclosure, and accelerates Beaver triple generation in SS. 

Notably, the above solutions all work in an interactive manner, and some of them~\cite{liu2022securely,yang2023efficient,zhao2024pratical} require multiple servers, which is significantly different from our design of Pacifo. Compared with them, the non-interactivity of Pacifo allows minimal communication overhead, and aligns naturally with client–server deployments such as MLaaS.






\section{Preliminaries}\label{Sec:preliminary}

\subsection{Notations.}
We use symbols $\mathbb{R}$ and $\mathbb{Z}$ to denote sets of real numbers and integers, respectively.
The uppercase bold letter $\boldsymbol{X}\in \mathbb{R}^{a \times b}$ is used for a matrix with $a\times b$ real values, and $\boldsymbol{X}[i,j]$ denotes the element indexed by row $i$ and column $j$ of $\boldsymbol{X}$. The 
bold lowercase letter $\boldsymbol{x}$ is for vectors, and $x_i$ denotes its $i$-th element.
The lowercase letter $x$ is used for scalar values. We write $[n]$ to represent the set $\{0,1,\ldots,n-1\}$. A symbol with a tilde, e.g., $\tilde{\boldsymbol{x}}$, denotes the encrypted result of $\boldsymbol{x}$ by HE.

\subsection{Threat Model}
Following related studies, e.g.,~\cite{hao2022iron,pang2024bolt,dong2023puma,hou2023ciphergpt}, we adopt the semi-honest (i.e., honest-but-curious) adversarial model. In this setting, the parties involved adhere to the established protocols faithfully but may attempt to extract additional private information passively. This assumption is practical, as a foundational level of mutual trust is necessary for implementing inference services. Malicious deviations from the inference protocol would lead to inference failures, offering no benefit to either the client or the server.

\subsection{ResNet20 Model.}\label{Subsec:resnet20}
ResNet20 is a convolutional neural network with residual connections to address the vanishing gradient problem~\cite{he2016deep}. Fig.~\ref{fig:resnet20} presents the architecture of a typical ResNet20 model and highlights its layer categorizations. The model primarily consists of nine residual blocks, each comprising convolutional layers, batch normalization, and ReLU activations.\\
\textbf{Residual Block.} The residual block in ResNet20 introduces shortcut connections to preserve input information for deeper layers. A basic residual block is presented on the right side of Fig.~\ref{fig:resnet20}, and it could be defined as follows:
\begin{align}
    \mathrm{ResidualBlock}(\boldsymbol{X})=\mathrm{ReLU}(\boldsymbol{X}+\mathrm{F}(\boldsymbol{X})),
\end{align}
where $\mathrm{F}(\boldsymbol{X})$ represents the transformation applied on input $\boldsymbol{X}$, consisting of two convolutional layers, batch normalization, and ReLU activation. Specifically, it is expressed as follows:
\begin{align}
    \mathrm{F}(\boldsymbol{X})=\mathrm{ConvBN}(\mathrm{ReLU}(\mathrm{ConvBN}(\boldsymbol{X},\boldsymbol{W}_1)),\boldsymbol{W}_2),
\end{align}
where $\mathrm{ConvBN}(\cdot)$ denotes a convolution operation followed by batch normalization for simplicity; $\boldsymbol{W}_1$ and $\boldsymbol{W}_2$ denote the corresponding convolution parameters.
\\

\begin{figure}[t]
    \centering
\includegraphics[width=.9\linewidth]{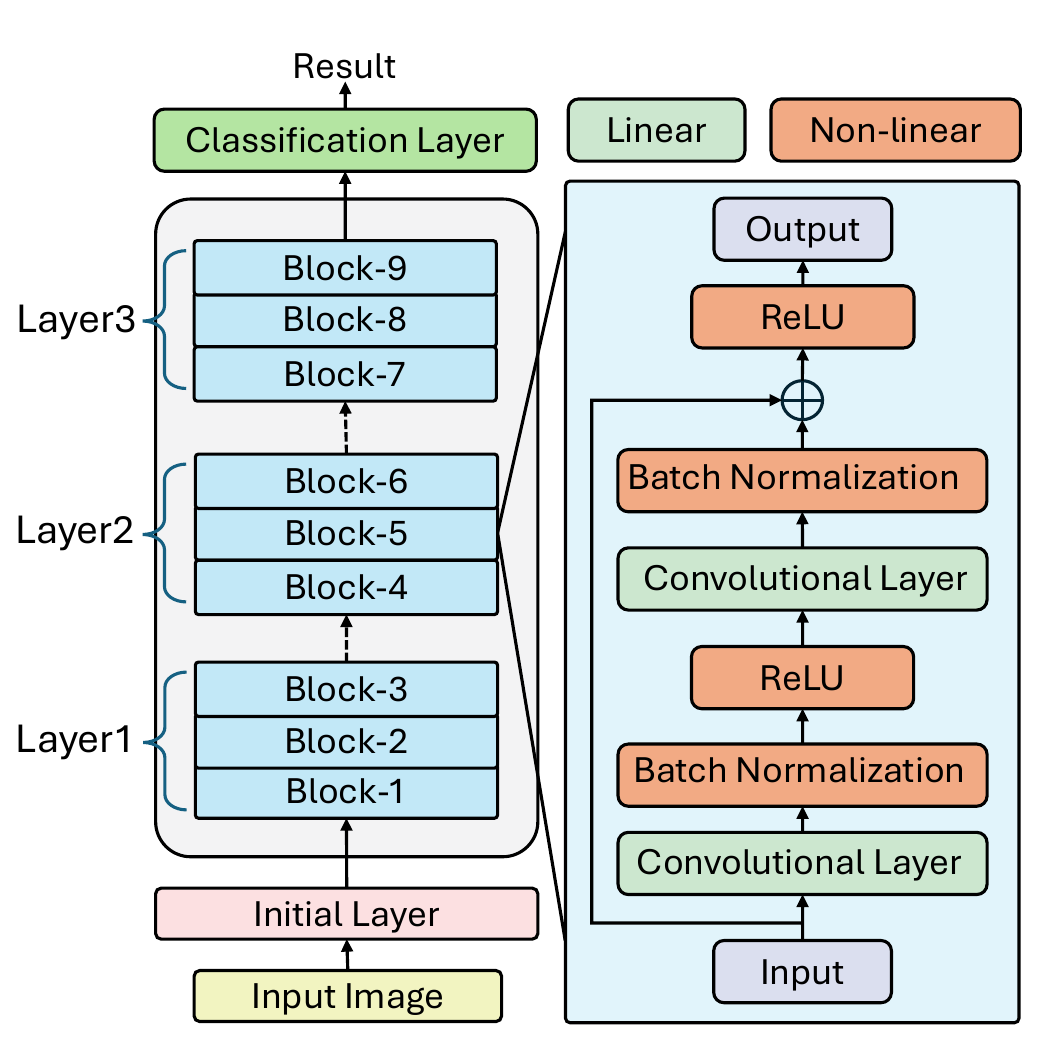}
    \caption{The architecture of a ResNet20 model.}
    \label{fig:resnet20}
\end{figure}

\subsection{RNS-CKKS Scheme}\label{subsec:HE}
We employ \textit{residue number system} (RNS) variant of Cheon-Kim-Kim-Song (CKKS) in this paper, which is a \textit{leveled} FHE scheme supporting a multiplicative depth of $L$. Once the level of a ciphertext becomes too high, a \textit{bootstrapping} operation is required to refresh it for more computations.
A summary of basic homomorphic operations used in this paper is as follows, and we employ them in a black-box manner:
\begin{itemize}
    \item $p(x)=\Pi_{\textnormal{Code}}(\boldsymbol{x})$. The encoding protocol $\Pi_{\textnormal{Code}}$ inputs a vector $\boldsymbol{x}$ and outputs a polynomial $p(x)$.
    \item $\hat{x}=\Pi_{\textnormal{Enc}}(\boldsymbol{x})$. The encryption protocol $\Pi_{\textnormal{Enc}}$ inputs a vector $\boldsymbol{x}$ and outputs a SIMD ciphertext $\hat{x}$.
    \item $\boldsymbol{x}=\Pi_{\textnormal{Dec}}(\hat{x})$. The decryption protocol inputs a SIMD ciphertext $\hat{x}$ and outputs a vector $\boldsymbol{x}$.
    \item $\hat{c}=\Pi_{\textnormal{Add}}(\hat{a},b)$. The ciphertext-plaintext addition protocol inputs a ciphertext $\hat{a}$ and a plaintext $b$; outputs the result: $\Pi_{\textnormal{Enc}}([a_0+b_0,\ldots,a_{N-1}+b_{N-1}])$.
    \item $\hat{c}=\Pi_{\textnormal{Add}}(\hat{a},\hat{b})$. The ciphertext-ciphertext addition protocol inputs two ciphertexts $\hat{a}$ and $\hat{b}$; outputs the result: $\Pi_{\textnormal{Enc}}([a_0+b_0,\ldots,a_{N-1}+b_{N-1}])$.
    \item $\hat{c}=\Pi_{\textnormal{Mul}}(\hat{a},b)$. The ciphertext-plaintext multiplication protocol inputs a ciphertext $\hat{a}$ and a plaintext $b$; outputs the result: $\Pi_{\textnormal{Enc}}([a_0b_0,\ldots,a_{N-1}b_{N-1}])$.
    \item $\hat{c}=\Pi_{\textnormal{Mul}}(\hat{a},\hat{b})$. The ciphertext-ciphertext multiplication protocol inputs two ciphertexts $\hat{a}$ and $\hat{b}$; outputs the result: $\Pi_{\textnormal{Enc}}([a_0b_0,\ldots,a_{N-1}b_{N-1}])$.
    \item $\hat{x}=\Pi_{\textnormal{Boot}}(\hat{x})$. The bootstrapping protocol $\Pi_{\textnormal{Boot}}$ inputs a ciphertext $\hat{x}$ and outputs a refreshed ciphertext with reduced noise. This operation is naturally supported in the OpenFHE library~\cite{OpenFHE}.
\end{itemize}









\section{Basic Circuit Design}\label{Sec:basic_design}

This section provides an overview of our circuit design for ResNet20. Specifically, Section~\ref{subsec:convolution} details the convolution design; Section~\ref{subsec:batchnorm} describes the implementation of batch normalization; Section~\ref{subsec:relu} discusses the ReLU implementation; Section~\ref{subsec:bootstrapping} covers bootstrapping; and Section~\ref{subsec:overallcircuit} summarizes the overall circuit architecture.

\subsection{Convolution}\label{subsec:convolution}

\begin{figure*}[h]
    \centering
\includegraphics[width=.95\linewidth]{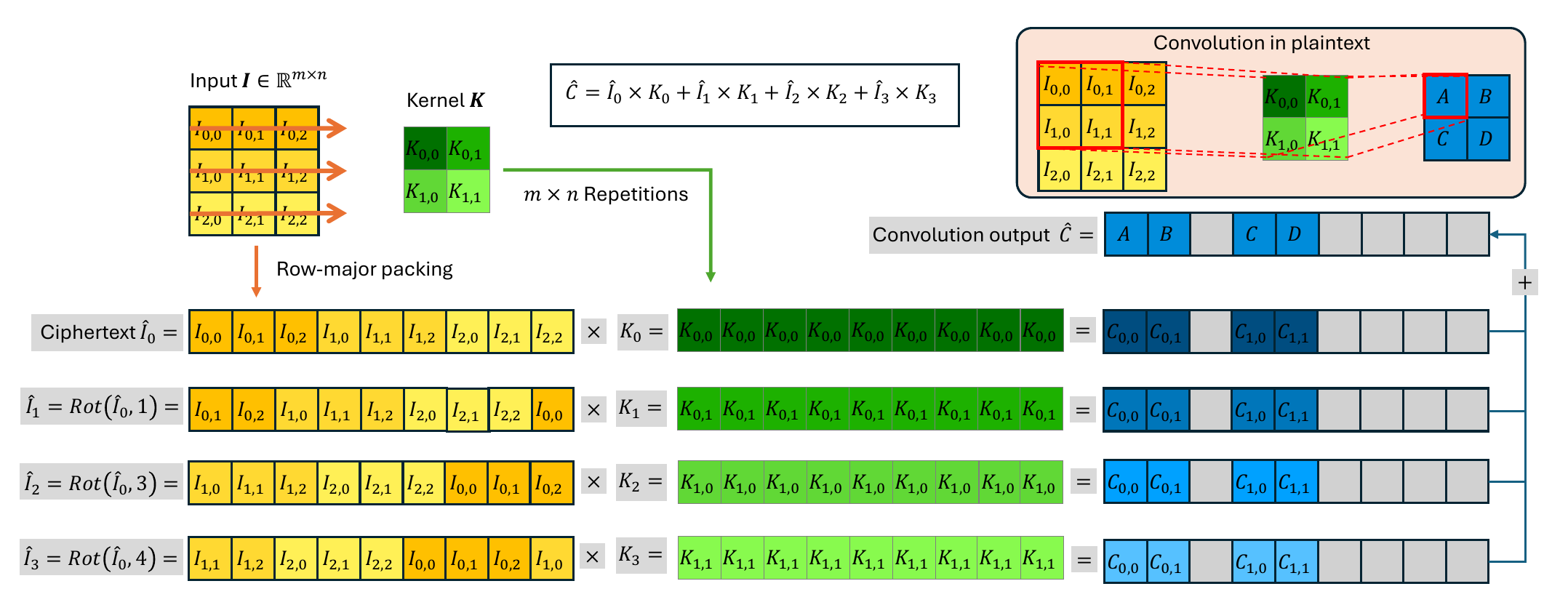}
    \caption{A toy example of our utilized vector encoding method for 2D convolutional computation with an input feature map $I$ of size $3\times 3$ and a kernel $K$ of size $k=2$. Specifically, the input $I$ needs to be packed and encrypted before computation.}
    \label{fig:convolution}
\end{figure*}

We utilize the \textit{Vector Encoding} approach~\cite{juvekar2018gazelle,dathathri2019chet,dathathri2020eva} to facilitate the convolutional computation in the context of FHE. Let $\boldsymbol{I}\in\mathbb{R}^{c_{\textnormal{in}}\times w \times w}$ denote an input feature map and assume $c_{\textnormal{in}}\cdot w\cdot w\leq N/2$, we are allowed to encrypt $\boldsymbol{I}$ into a single ciphertext $\hat{C}$ in a row-major fashion as follows:
\begin{align}
    \hat{C}= \Pi_{\textnormal{Enc}}([c_n\mid_{n \in [c_{\textnormal{in}}w^2]}]),~&\textnormal{s.t.}~c_{w^2i+wi+k} = \boldsymbol{I}[i,j,k],\nonumber\\
    &\textnormal{for}~i\in [c_{\textnormal{in}}],j \in [w], k \in [w].\label{Equa:input}
\end{align}
For a plaintext filter $\boldsymbol{F}\in \mathbb{R}^{c_{\textnormal{in}}\times k \times k}$, we need to create $k^2$ plaintext vectors containing $w^2$ repetitions of the corresponding kernel value.
For instance, a plaintext vector encoding the kernel value $\boldsymbol{F}[c_{\textnormal{in}},m,n]$ can be formally defined as follows:
\begin{align}
    \boldsymbol{h}_{m,n}\!= \![\boldsymbol{h}_i\!\mid_{i \in [c_{\textnormal{in}}]}] ,\textnormal{where}~\boldsymbol{h}_i\!=&[\underbrace{\boldsymbol{F}[i,m,n],\!\dots\!,\boldsymbol{F}[i,m,n]}_{w^2 \text{ times}}],\nonumber\\
    &\textnormal{for}~m \in [k], n \in [k],\label{Equa:kernel1}
\end{align}
and the final output $\boldsymbol{H}$ containing $k^2$ vectors is defined as: 
\begin{align}
    &\boldsymbol{H}\!=\! [\boldsymbol{h}_{0,0},\!\ldots\!,\boldsymbol{h}_{m,n},\!\ldots\!,\boldsymbol{h}_{k-1,k-1}]~\textnormal{for}~m \in [k], n \in [k].\label{Equa:kernel2}
\end{align}
With the obtained $k^2$ plaintext kernel vector, we could perform rotations on the ciphertext $\hat{C}$ to create another $k^2-1$ ciphertext for the alignment. Particularly, the required rotations on the input are dependent on the size of the kernels~\cite{kim2023optimized,rovida2024encrypted}.
After creating a set of input rotations $\hat{\boldsymbol{C}}=[\hat{C}_{0,0},\ldots,\hat{C}_{k-1,k-1}]$, we are able to align them with their corresponding plaintext kernel vectors and perform the following FHE computations:
\begin{align}
    \hat{O}&= \Pi_{\textnormal{Add}}(\Pi_{\textnormal{Mul}}(\hat{\boldsymbol{C}},\boldsymbol{H}))\nonumber\\
    &=\Pi_{\textnormal{Add}}(\underbrace{\Pi_{\textnormal{Mul}}(\hat{C}_{0,0},\boldsymbol{h}_{0,0}),\!\ldots\!,\Pi_{\textnormal{Mul}}(\hat{C}_{k-1,k-1},\boldsymbol{h}_{k-1,k-1})}_{k^2 \text{ multiplications}}), \label{Equa:cnnoutput}
\end{align}
where $\hat{O}$ denotes the output ciphertext containing the convolution result. Until now, we have successfully applied a filter containing $c_{\textnormal{in}}$ filters of size $k \times k$ over an input feature map of $\mathbb{R}^{c_{\textnormal{in}}\times w \times w}$. 
The whole process above requires $k^2-1$ rotations and $k^2$ multiplications. To better illustrate the whole process, we present a toy example in Fig.~\ref{fig:convolution}, where a kernel of $\mathbb{R}^{1\times 2 \times 2}$ is applied on an input feature map of $\mathbb{R}^{1\times 3 \times 3}$. 

Most of the convolutions in ResNet20 (excluding downsampling) are evaluated with kernels of size $\mathbb{R}^{c_{\textnormal{in}}\times 3 \times 3}$ on an input map of size $\mathbb{R}^{c_{\textnormal{in}}\times 32 \times 32}$, and our described vector encoding approach can be easily extended to this setting~\cite{lee2022low, rovida2024encrypted}. 
Moreover, it is worth noting that a complete convolutional layer usually contains $c_{\textnormal{out}}$ filters, and we could denote the $i$-th filter as $\boldsymbol{F}_i$, for $i\in[c_{\textnormal{out}}]$. The final convolution result $\hat{R}$ is obtained by aggregating the results of applying all filters $\{\boldsymbol{F}_i\}_{i\in[c_{\textnormal{out}}]}$ on the input feature map. We present the complete secure convolution process in Algorithm~\ref{Algorithm:Convolution}. For the downsampling operations in ResNet20, we follow Rovida~\emph{et~al.}~\cite{rovida2024encrypted} and implement them by rearranging values in the output ciphertext through additional masking and rotations. \\
\textbf{Computational Complexity.} According to Algorithm~\ref{Algorithm:Convolution}, we can calculate that there are $k^2\cdot c_{\textnormal{out}}$ multiplications, $k^2+c_{\textnormal{out}}-1$ rotations, and $k^2\cdot c_{\textnormal{out}}$ additions, which means the overall complexity is only related to kernel size and output channels. 





    
    

\begin{algorithm}
\caption{Secure Convolution} 
\label{Algorithm:Convolution}
\KwIn{
Encrypted feature map $\hat{C}$ (cf. Equation (\ref{Equa:input})), rotation indexes $\boldsymbol{I}$, 
set of filters $\{\boldsymbol{F}_i\}_{i\in[c_{\textnormal{out}}]}$, input feature map width $w$
}
\KwOut{Encrypted result $\hat{R}$}

\SetAlgoLined
\DontPrintSemicolon
\SetKwProg{Fn}{Function}{:}{\textbf{End function}}

\Fn{$\mathrm{Convolution}(\hat{C},\boldsymbol{I},\{\boldsymbol{F}_i\}_{i\in[c_{\textnormal{out}}]})$}{
    


    
 \tcp{Apply rotations on the input feature map}
 $\hat{\boldsymbol{C}}\leftarrow \{\}$ 
 
    \For{each $\mathrm{index}$ in $\boldsymbol{I}$}{
    $\hat{\boldsymbol{C}}_i \leftarrow \Pi_{\textnormal{Rot}}(\hat{C},\mathrm{index})$
    }

\tcp{Apply filters on the input rotations}
    $\hat{R} \leftarrow \Pi_{\textnormal{Enc}}(0)$ 

    \For{$i \leftarrow 0$ to $c_{\textnormal{out}}$}{

    $\boldsymbol{H} \leftarrow \mathrm{EncodeFilter}(\boldsymbol{F}_i)$ (cf. Equations (\ref{Equa:kernel1}), (\ref{Equa:kernel2}))

    $\hat{O}= \Pi_{\textnormal{Add}}(\Pi_{\textnormal{Mul}}(\hat{\boldsymbol{C}},\boldsymbol{H}))$ (cf. Equation (\ref{Equa:cnnoutput}))
    

     $\hat{R}\leftarrow \Pi_{\textnormal{Add}}(\hat{R},\Pi_{\textnormal{Rot}}(\hat{O},w^2))$
    }

    \textbf{Return} $\hat{R}$
}
\end{algorithm}






\subsection{Batch Normalization}\label{subsec:batchnorm}

For a given feature map $\boldsymbol{x}$ in a mini-batch, BatchNorm applies the following transformation on each element $x_i$:
\begin{align}
    y_i=\frac{x_i-\mathrm{E}(x)}{\sqrt{\mathrm{Var}(x)+\epsilon}} \cdot \gamma + \beta,
    \label{Equa:BatchNorm}
\end{align}
where $\mathrm{E}(x)$ and $\mathrm{Var}(x)$ are the mean and variance computed over the current mini-batch, $\epsilon$ is a small constant to prevent division by zero, and $\gamma$ and $\beta$ are trainable affine parameters. Our starting point is that when doing inference, values of $\mathrm{E}(x)$ and $\mathrm{Var}(x)$ are actually pre-computed and can be regarded as constants. In this case, we can simplify the computation in Equation~(\ref{Equa:BatchNorm}) by reformulating the coefficients as follows:
\begin{align}
    y_i&=\frac{x_i-E}{V} \cdot \gamma + \beta,\\
    &= \frac{\gamma}{V}x_i - \frac{E\gamma}{V} + \beta.
    \label{Equa:Simplified_BatchNorm}
\end{align}
It is obvious that Equation~(\ref{Equa:Simplified_BatchNorm}) can be viewed as an affine transformation: $Ax_i+B$, where $A=\frac{\gamma}{V}$ and $B=- \frac{E\gamma}{V} + \beta$, and its computation consumes $1$ circuit depth. However, we can combine BatchNorm with the former convolution to save the depth consumption further. Algorithm~\ref{Algorithm:ConvBN} describes our method for integrating BatchNorm with convolution computations.

\begin{algorithm}
\caption{Convolution with BatchNorm (ConvBN)} 
\label{Algorithm:ConvBN}
\KwIn{
Encrypted input feature map $\hat{C}$, rotation indexes $\boldsymbol{I}$, 
set of filters $\{\boldsymbol{F}_i\}_{i\in[c_{\textnormal{out}}]}$, 
BN parameters $\{\gamma_i, \beta_i, E_i, V_i\}_{i\in[c_{\textnormal{out}}]}$, input width $w$
}
\KwOut{Encrypted output $\hat{R}$ after Conv+BN}

\SetAlgoLined
\DontPrintSemicolon
\SetKwProg{Fn}{Function}{:}{\textbf{End function}}

\Fn{$\mathrm{ConvBN}(\hat{C}, \boldsymbol{I}, \{\boldsymbol{F}_i\}, \{\gamma_i, \beta_i, E_i, V_i\}, w)$}{
    
    $\hat{\boldsymbol{C}} \leftarrow \{\}$\;
    \For{each $\textnormal{index}$ in $\boldsymbol{I}$}{
        $\hat{\boldsymbol{C}}_{\textnormal{index}} \leftarrow \Pi_{\textnormal{Rot}}(\hat{C}, \textnormal{index})$\;
    }

    $\hat{R} \leftarrow \Pi_{\textnormal{Enc}}(0)$

    \For{$i \leftarrow 0$ to $c_{\textnormal{out}}-1$}{
        $A_i \leftarrow \frac{\gamma_i}{V_i}$\;
        $B_i \leftarrow -\frac{\gamma_i \cdot E_i}{V_i} + \beta_i$\;

        $\boldsymbol{H} \leftarrow \mathrm{EncodeFilter}(A_i \cdot \boldsymbol{F}_i)$ \tcp{Scale filter}

        $\hat{O} \leftarrow \Pi_{\textnormal{Add}}(\Pi_{\textnormal{Mul}}(\hat{\boldsymbol{C}}, \boldsymbol{H}))$ 

        $\hat{B} \leftarrow \Pi_{\textnormal{Enc}}(B_i)$ \tcp{Encode bias}

        $\hat{O} \leftarrow \Pi_{\textnormal{Add}}(\hat{O}, \hat{B})$ \tcp{Add BN bias}

        $\hat{R} \leftarrow \Pi_{\textnormal{Add}}(\hat{R}, \Pi_{\textnormal{Rot}}(\hat{O}, w^2))$\;
    }

    \textbf{Return} $\hat{R}$
}
\end{algorithm}

\subsection{ReLU}\label{subsec:relu}
ReLU is a widely used activation function that introduces non-linearity in neural networks.
For an input value $x$, the ReLU activation is defined as follows:
\begin{align}
    \mathrm{ReLU}(x)=\max{(0,x)},
\end{align}
As FHE only supports polynomial operations, it is nontrivial to compute $\max{(0,x)}$ in a non-interactive manner. Previous works~\cite{lee2022privacy,rovida2024encrypted} usually approximated ReLU with a large-degree polynomial to maintain its precision. However, while this method maintains function accuracy, it often consumes many FHE level budgets, and a large running time is required. In this paper, we provide an aggressive but much more efficient approximation for ReLU.
Specifically, we first experimentally observe the range of input values (i.e, $[-\alpha,\alpha]$) to ReLU with the plaintext model. Then, we scale all the input into the range of $[-1,1]$ by setting a scale $1/\alpha$ in the previous convolutional layer. By doing so, we are able to use an aggressive quadratic to efficiently approximate ReLU within the range of $[-1,1]$:
\begin{align}
    \mathrm{ReLU}(x)\approx ax^2+bx+c, ~\textnormal{for}~\abs{x} \leq 1,
    \label{approx:relu}
\end{align}
where we use the Remez exchange algorithm~\cite{remez1934determination} to determine the optimal coefficients $a$, $b$, and $c$. This iterative method minimizes the maximum error between the polynomial and the target function over a given interval. The obtained results are $a=0.4950$, $b=0.5000$, $c=0.0656$. However, since we approximate ReLU using only a quadratic polynomial in (\ref{approx:relu}), the maximum error remains $6.06\times10^{-2}$. There are two main reasons why we could use this aggressive approximation:
\begin{itemize}
    \item With layer freezing optimization, most ReLU activations can be computed accurately in plaintext on the client side, with only a small subset requiring the approximation.
    \item We utilize knowledge distillation to effectively mitigate the performance drop caused by approximated ReLUs, ensuring minimal impact on overall model accuracy.
\end{itemize}
\textbf{Efficient Polynomial Computation.} We could utilize the standard Horner's method~\cite{wang1955horner} to optimize the cost of evaluating equation (\ref{approx:relu}) further. Specifically, we first evaluate $(ax+b)$, and then compute $(ax+b)\cdot x+c$, saving one multiplication compared to the naive method $ax^2+bx+c$. Algorithm~\ref{Algorithm:ReLU} presents the details of our secure ReLU evaluation.
\begin{algorithm}
\caption{Secure ReLU} 
\label{Algorithm:ReLU}
\KwIn{
$\hat{\boldsymbol{x}}=\Pi_{\textnormal{Enc}}([x_0,\ldots,x_{n-1}])$; 
}
\KwOut{$\hat{\boldsymbol{y}}=\Pi_{\textnormal{Enc}}([y_0,\ldots,y_{n-1}])$ (cf. Equation (\ref{approx:relu}))}

\SetKwFunction{FReLU}{ReLU}
\SetAlgoLined
\DontPrintSemicolon
\SetKwProg{Fn}{Function}{:}{\textbf{End function}}

\Fn{$\mathrm{ReLU}(\hat{\boldsymbol{x}})$}{
    $\hat{\boldsymbol{s}} \leftarrow =\Pi_{\textnormal{Add}}(\Pi_{\textnormal{Mul}}(\hat{\boldsymbol{x}},a),b)$ \tcp{$ax+b$}
    
    $\hat{\boldsymbol{y}} \leftarrow =\Pi_{\textnormal{Add}}(\Pi_{\textnormal{Mul}}(\hat{\boldsymbol{s}},\hat{\boldsymbol{x}}),c)$ \tcp{$(ax+b)x+c$}
    
    \textbf{Return} $\hat{\boldsymbol{y}}$
}
\end{algorithm}
\subsection{Design of bootstrapping}\label{subsec:bootstrapping}

Once a ciphertext’s level becomes too low, a bootstrapping operation is required to restore it to a higher level, enabling further computations. However, the operation itself also consumes a certain number of levels.
The bootstrapping operation is often the most expensive in FHE-based transformers. For instance, it accounts for $37.72\%$ and $53.96\%$ of the total runtime in NEXUS~\cite{zhang2024secure} and THOR~\cite{moon2024thor}, respectively. Therefore, careful design and strategic placement of bootstrapping operations are crucial for optimizing runtime performance.

In this work, we use the bootstrapping scheme proposed by Bossuat~\textit{et al.}~\cite{bossuat2021efficient}. Our goal is to increase the efficiency of each bootstrapping while reducing the number of times it is invoked.
Specifically, we use the sparse ternary scheme~\cite{cheon2017homomorphic} for secret key distribution, which provides lower computational complexity and higher precision. For the level budgets required for encoding and decoding operations in bootstrapping, i.e., CoeffToSlot (CtoS) and SlotToCoeff (StoC), we allocate them as $\{3,3\}$ to ensure high computational efficiency. Additionally, an extra $9$ levels are required for ModRaise, Chebyshev approximations, and double-angle iterations during bootstrapping, bringing the total level budget for a single bootstrapping operation to $15$. By carefully setting the multiplicative depth, we limit the number of bootstrapping operations to just two per residual block. We will further analyze the impact of each bootstrapping operation on accuracy and computational overhead in Section~\ref{subsec:microbenchmarks}.

\begin{figure}[h]
    \centering
\includegraphics[width=.8\linewidth]{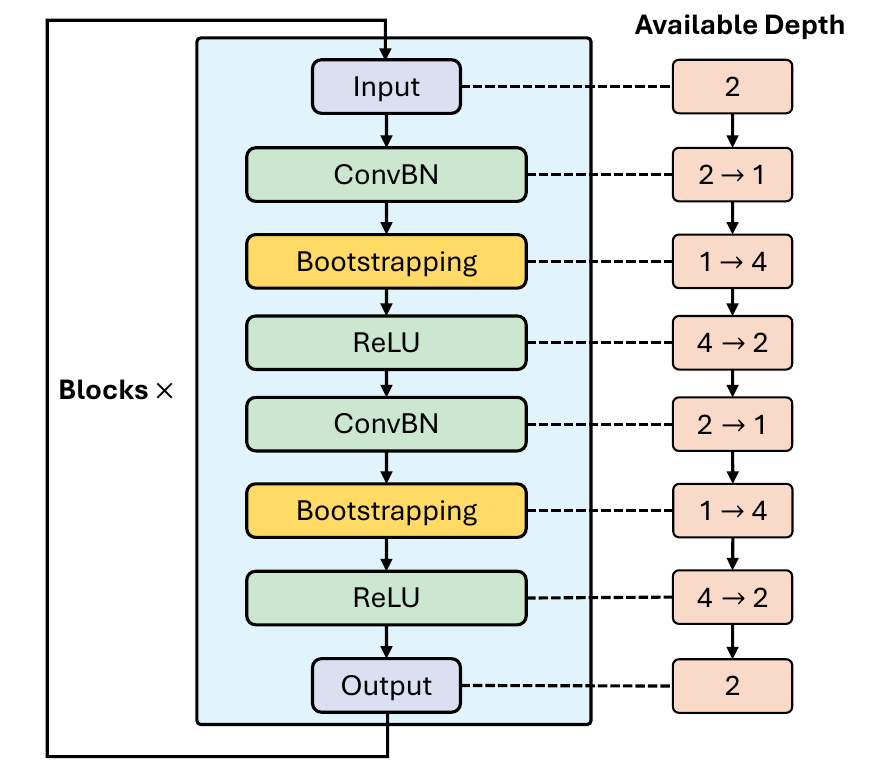}
    \caption{Placement of bootstrappings in a block of ResNet20.}
    \label{fig:circuit}
\end{figure}

\subsection{Overall Circuit}\label{subsec:overallcircuit}

In this subsection, we discuss our FHE circuit designed for a basic residual block in ResNet20. Specifically, Algorithm~\ref{Algorithm:residualblock} generalizes the FHE operations required to evaluate a block, which is also aligned with the process in Fig.~\ref{fig:circuit}. Given that ResNet20 is composed of multiple residual blocks stacked sequentially, our design for an individual block can be seamlessly extended to the entire model, as illustrated in Fig.~\ref{fig:resnet20}.

\begin{algorithm}
\caption{Secure ResidualBlcok} 
\label{Algorithm:residualblock}
\KwIn{
$\hat{\boldsymbol{x}}=\Pi_{\textnormal{Enc}}([x_0,\ldots,x_{n-1}])$; 
}
\KwOut{$\hat{\boldsymbol{y}}=\Pi_{\textnormal{Enc}}([y_0,\ldots,y_{n-1}])$ (cf. Equation (\ref{approx:relu}))}

\SetAlgoLined
\DontPrintSemicolon
\SetKwProg{Fn}{Function}{:}{\textbf{End function}}

\Fn{$\mathrm{ResidualBlock}(\hat{\boldsymbol{x}})$}{
    $\hat{\boldsymbol{t}} \leftarrow \mathrm{ConvBN}(\hat{\boldsymbol{x}})$ \tcp{Convolution + BatchNorm}

    $\hat{\boldsymbol{t}} \leftarrow \Pi_{\textnormal{Boot}}(\hat{\boldsymbol{t}})$
    
    $\hat{\boldsymbol{t}} \leftarrow \mathrm{ReLU}(\hat{\boldsymbol{t}})$ 

    $\hat{\boldsymbol{t}} \leftarrow \mathrm{ConvBN}(\hat{\boldsymbol{t}})$

    $\hat{\boldsymbol{t}} \leftarrow \Pi_{\textnormal{Boot}}(\hat{\boldsymbol{t}})$ 

    $\hat{\boldsymbol{t}} \leftarrow \mathrm{ReLU}(\hat{\boldsymbol{t}})$ 

    $\hat{\boldsymbol{y}} \leftarrow \Pi_{\textnormal{Add}}(\hat{\boldsymbol{t}},\hat{\boldsymbol{x}})$ 
    \tcp{Residual connection}

    \textbf{Return} $\hat{\boldsymbol{y}}$
}
\end{algorithm}




\section{Fine-Tuning based Layer Partition}\label{Sec:layer_freezing}

In this section, we describe our approach to reducing the reliance on expensive HE operations from a model architecture perspective. Section~\ref{subsec:layer_partition} introduces a layer partitioning strategy designed to minimize the number of model layers evaluated under HE. Section~\ref{subsec:TL} further details how we apply transfer learning to maintain model accuracy after partitioning.

\subsection{Layer Partition for Efficiency}\label{subsec:layer_partition}

Our starting point is that when an open-source, pre-trained model is fine-tuned on a downstream dataset, updating only a subset of model parameters can achieve good performance. In such cases, only the fine-tuned layers need to remain private during inference, while the unchanged layers can be deemed as ``public'' and shared without compromising model privacy. Consequently, leveraging these public layers appropriately can lead to improvements in computational efficiency.
This insight is also leveraged in prior works~\cite{rathee2024mpc,lam2024efficient}. However, \cite{rathee2024mpc} applies this idea solely within the context of MPC, while \cite{lam2024efficient} adds extra network layers and evaluates on a relatively simple binary image classification task (Cat vs Dog).

Recall from Section~\ref{subsec:HE} that a key property of HE is that any computation involving ciphertext inputs always produces a ciphertext output. Hence, to reduce the number of operations under HE, one must separate the fine-tuned layers (i.e., private layers) from the public layers. Specifically, we could restrict fine-tuning updates to the final layers of the model, with only these layers needing to be protected and retained on the server, which is as shown in Fig.~\ref{fig:layer_partition}. 
Let $\mathcal{F}_{\textnormal{pub}}$ and $\mathcal{F}_{\textnormal{priv}}$ denote the public and private portions of the model, respectively. The client first performs inference over the public layers to obtain an intermediate representation $\mathbf{z}_{\textnormal{pub}} = \mathcal{F}_{\textnormal{pub}}(\mathbf{X})$, which is then encrypted into $\hat{z}_{\textnormal{pub}}$ and sent to the server. The server continues inference using the private layers, producing the encrypted prediction $\hat{\mathbf{y}} = \mathcal{F}_{\textnormal{priv}}(\hat{\mathbf{z}}_{\textnormal{pub}})$. Finally, the client decrypts $\hat{\mathbf{y}}$ to obtain the prediction $\mathbf{y}$. The whole process is also generalized in Algorithm~\ref{Algorithm:inference}.
Consequently, assume $\mathcal{F}_{\textnormal{pub}}$ and $\mathcal{F}_{\textnormal{priv}}$ consist of $r$ and $T-r$ residual blocks, respectively. Then, under the assumption that each residual block incurs a similar HE cost, the overall HE computation overhead is reduced by a factor of $\frac{T}{T-r}$, effectively saving $\frac{r}{T} \times100\%$ of homomorphic computation. In practice, since each block typically involves bootstrapping and nonlinear approximation steps, the reduction in latency could be higher than the linear estimate (Section~\ref{subsec:macrobenchmarks}).

\begin{algorithm}
\caption{Private Inference via Layer Partitioning}
\label{Algorithm:inference}
\KwIn{Input image $\mathbf{X} \in \mathbb{R}^{C \times H \times W}$;}
\KwOut{Predicted label probabilities $\boldsymbol{y} \in \mathbb{R}^{K}$;}

\SetAlgoLined
\DontPrintSemicolon

\textbf{Let:} \\
\Indp
    $\mathcal{F}\! =\! \{L_1, \ldots, L_T\}$ denote the full ResNet20 model, where each $L_i$ is a ResidualBlock;\\
    $\mathcal{F}_{\text{pub}} \!=\! \{L_1, \ldots, L_r\}$: public layers (client side); \\
    $\mathcal{F}_{\text{priv}}\!=\! \{L_{r+1}, \ldots, L_T\}$: private layers (server side); \\
    $\mathcal{F}_{\text{class}}$: final classifier layer (server side); \\
\Indm

\vspace{0.5em}

\textbf{Client executes:} \\
\Indp
    $\mathbf{z}_{\textnormal{pub}} = \mathcal{F}_{\textnormal{pub}}(\mathbf{X})$; \tcp{Apply public layers} 
    
    $\hat{z}_{\textnormal{pub}} \leftarrow \Pi_{\textnormal{Enc}}(\mathbf{z}_{\textnormal{pub}})$; 
    
    Send ciphertext $\hat{z}_{\textnormal{pub}}$ to the server; \\
\Indm

\vspace{0.5em}

\textbf{Server executes:} \\
\Indp
    $\hat{z}_{\textnormal{priv}} = \mathcal{F}_{\textnormal{priv}}(\hat{z}_{\textnormal{pub}})$; \tcp{Apply private layers (cf. Algorithm \ref{Algorithm:residualblock})}

    $\hat{y} = \mathcal{F}_{\textnormal{class}}(\hat{z}_{\textnormal{priv}})$; 
    
    Send ciphertext $\hat{y}$ back to the client; \\
\Indm

\vspace{0.5em}

\textbf{Client executes:} \\
\Indp
    $\mathbf{y} \leftarrow \Pi_{\textnormal{Dec}}(\hat{y})$; \tcp{Decrypt to obtain the result}
\Indm

\end{algorithm}

While delegating the computation of public layers to clients may seem like a limitation, it imposes minimal computational overhead~\cite{rathee2024mpc}. In fact, plaintext inference is less intensive than the encryption and decryption operations that clients must perform regardless. 

\subsection{Transfer learning for Accuracy}\label{subsec:TL}

Although layer partition can significantly reduce the computational overhead of HE, it also introduces constraints on the fine-tuning process, as only the final layers of the model (i.e., the private portion) are permitted to be updated. To ensure high efficiency, the number of private layers is typically much smaller than that of public layers. This limited parameter space restricts the model's capacity to adapt to new tasks or datasets, which could lead to degraded performance.

To mitigate this issue, we adopt the Transfer Learning (TL) technique to keep useful features during pre-training and help the model adapt to the new dataset. Specifically, given a target downstream dataset $\mathcal{D}_{A}$, we begin by selecting an open-source model that has been pre-trained on a related source dataset $\mathcal{D}_{B}$. The source dataset is typically larger or more diverse than the target dataset, while sharing similar data modalities and semantic domains (e.g., CIFAR-10 vs. CIFAR-100). This enables the model to learn general representations from $\mathcal{D}_{B}$ that transfer well to $\mathcal{D}_{A}$. In this case, the pre-trained model parameters from $\mathcal{D}_{B}$ serve as a powerful initialization. Then, during the subsequent fine-tuning process on $\mathcal{D}_{A}$, we could freeze public layers to retain low-level feature extraction capabilities and update only the private layers to adapt high-level features to the target task. This TL-based approach is essential in our layer-partitioned setting: without pre-training, randomly initialized public layers would hinder feature extraction and degrade accuracy, while a strong initialization enables robust fine-tuning performance with limited trainable layers.

In conclusion, layer partition and transfer learning complement each other to improve both the efficiency and accuracy of private inference. Layer partition reduces the computational burden of encrypted operations, while transfer learning preserves model performance under limited trainable parameters. Figure \ref{fig:layer_partition} illustrates how these two techniques can be integrated for fine-tuning on a downstream task. Additional details on our experimental setup are provided in Section~\ref{subsec:models_and_datsets}.

\begin{figure}[t]
    \centering
\includegraphics[width=\linewidth]{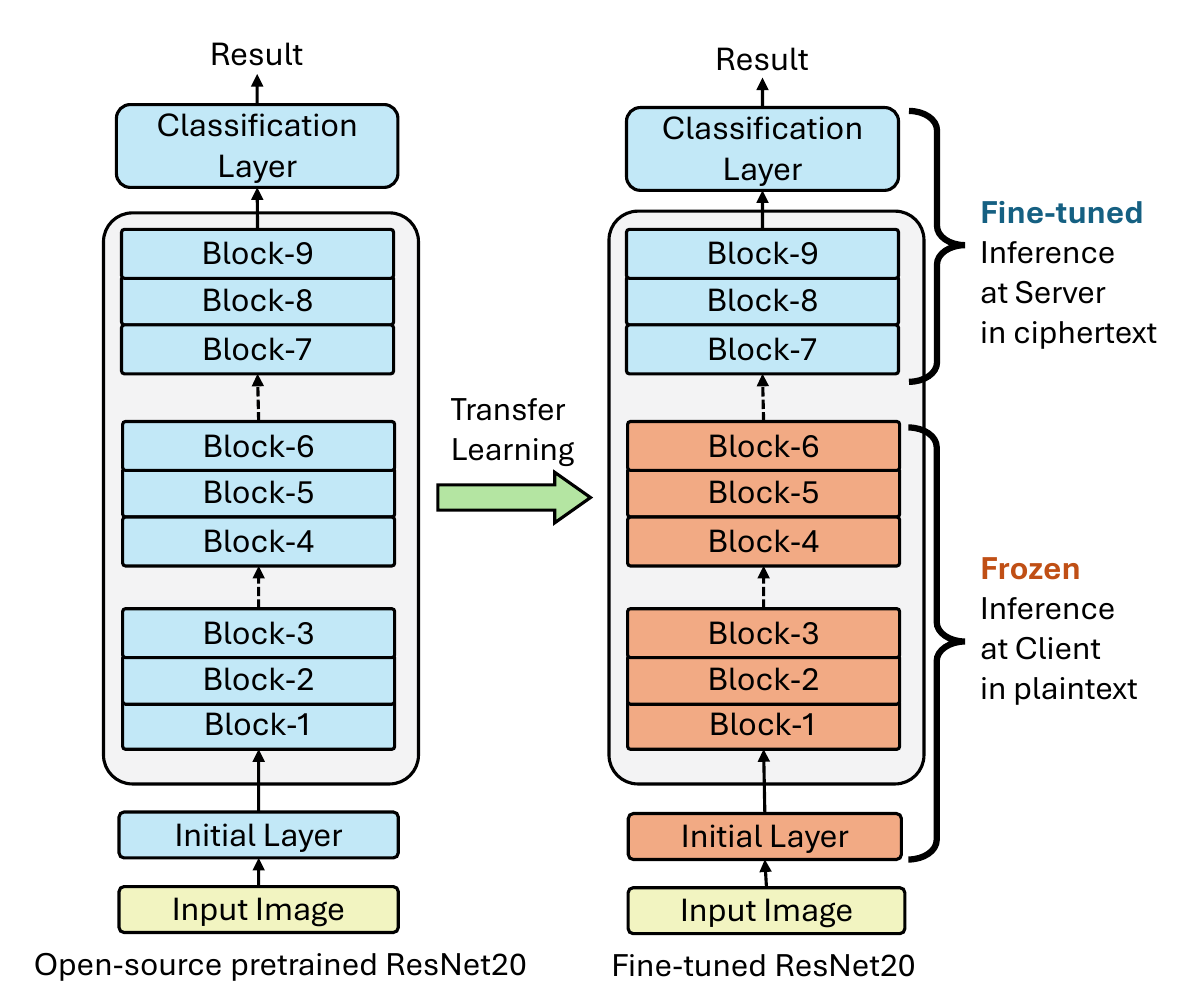}
    \caption{Leverage layer freezing techniques on an open-source pretrained ResNet20 model.}
    \label{fig:layer_partition}
\end{figure}

\section{Customized Security Levels}\label{Sec:security_standards}

In this section, we outline the approach to achieving different security standards. 
In the context of HE, security is typically quantified in terms of \textit{security levels}, which correspond to the computational cost required to breach the cryptographic security~\cite{albrecht2021homomorphic}. A scheme offers $\lambda$-bit security if breaking it requires about $2^{\lambda}$ operations. The most common security levels in HE are $128$-bit, $192$-bit, and $256$-bit. 

Previous related works~\cite{lee2022privacy,lee2022low,kim2023optimized,benamira2023tt,rovida2024encrypted} only support a fixed $128$-bit security, the most widely adopted level due to its balance between efficiency and protection. 
However, $128$-bit security could be inadequate for sensitive or long-term applications, especially with the rise of quantum threats. For instance, lower levels like $80$-bit security have already been deprecated due to vulnerability to modern attacks. Table~\ref{tab:cracking_costs} shows the estimated time required to break different security levels. The estimates are based on the capabilities of different computers as follows:
\begin{itemize}
    \item \textbf{Personal computer:} Capable of ${10}^9$ operations per second, representing standard personal computing devices.
    \item \textbf{Supercomputer:} Capable of $10^{18}$ operations per second, denoting cutting-edge computational systems like Frontier~\cite{rajaraman2023frontier} or Fugaku~\cite{sato2020co}.
    \item \textbf{Quantum computer:} Leverages Grover's algorithm~\cite{grover1996fast} to reduce brute-force complexity to $2^{n/2}$. 
    The time estimates further assume an idealized quantum computer capable of performing $2^{60}$ Grover iterations per second\footnote{This exceeds the capabilities of current or near-term hardware by several orders of magnitude; and it only serves as a theoretical lower bound to illustrate best-case quantum threats.}.
\end{itemize}

\begin{table}[t]
\centering
\caption{Cracking costs for different bit-security levels using various computing systems.}
\label{tab:cracking_costs}
\begin{tabular}{@{}l|lll@{}}
\toprule
\textbf{Security} & \textbf{Personal computer} & \textbf{Supercomputer} & \textbf{Quantum computer} \\ \midrule
80-bit   & \SI{1000}{years} & \SI{14}{days} & \SI{1}{seconds} \\
96-bit   & \SI{1e6}{years} & \SI{2500}{years} & \SI{10}{seconds} \\
128-bit  & \SI{1e20}{years} & \SI{1e13}{years} & \SI{16}{hours} \\
192-bit  & \SI{1e36}{years} & \SI{1e25}{years} & \SI{10}{years} \\
256-bit  & \SI{1e53}{years} & \SI{1e40}{years} & \SI{1e6}{years} \\
\bottomrule
\end{tabular}
\end{table}

To address these evolving needs, our framework is designed to provide flexible support for multiple HE security levels, allowing clients to trade off between performance and security based on their application requirements. Specifically, we have carefully designed three different HE parameter schemes (Table~\ref{Tab:HE_parameters}). And their security levels are estimated using the LWE Estimator~\cite{albrecht2015concrete} with best-known lattice attacks (e.g., primal, dual, and hybrid attacks). 

When configuring these parameters, we ensure that all encryption schemes used in our framework meet at least 128-bit security under the standard RLWE assumption. For use cases requiring higher assurances, we provide parameter sets supporting 192-bit and 256-bit security with adjusted ring dimensions and modulus sizes, taking into account the increased computational overhead.

\begin{table}[h]
  \caption{Key parameters for our RNS-CKKS schemes to achieve different HE security levels.}
  \label{Tab:HE_parameters}
  \centering
  \resizebox{\linewidth}{!}{ 
  \begin{tabular}{c|ccc|c}
    \toprule
    \textbf{Scheme} & \makecell{\textbf{Ring}\\($N$)}
    &\makecell{\textbf{Moduli chain}\\$log_2(PQ)$} &\makecell{\textbf{Noise distribution}\\($\chi_e$)}&\makecell{\textbf{Security level}\\($\lambda$)}   \\\midrule
     $1$& $2^{16}$ &$1228$ bits &$\mathcal{N}_{\mathbb{Z}}(0,\,3.19^2)$ &$128$-bit \\
     $2$& $2^{16}$ &$1188$ bits &$\mathcal{N}_{\mathbb{Z}}(0,\,3.19^2)$ &$192$-bit \\
     $3$& $2^{17}$ &$1289$ bits &$\mathcal{N}_{\mathbb{Z}}(0,\,3.19^2)$ &$256$-bit \\
     \bottomrule
  \end{tabular}
}
\end{table}

\section{Security Analysis}\label{Sec:security_anlaysis}
In this section, we formally analyze the security of the proposed Pacifo. We adopt the simulation-based security paradigm under the static semi-honest adversary model, assuming the adversary corrupts either the client $C$ or the server $S$ during protocol execution.

\subsection{Threat Model and Security Definition}

Following standard definitions, we define the correctness and security of the secure inference protocol as follows:

\begin{definition}[Secure inference protocol]
Let $\Pi$ be an inference protocol executed between a server $S$ holding model parameters $M=(M_1,\dots,M_\ell)$ and a client $C$ holding a private input $A$. The protocol $\Pi$ is secure if the following two properties hold:

\begin{itemize}
    \item \textbf{Correctness:} The output of the protocol execution is always equal to the correct inference result $M(A)$.

    \item \textbf{Security:}
    \begin{itemize}
        \item \textbf{Corrupted Client:} There exists an efficient simulator $\mathsf{Sim}_C$ such that the client’s view $\text{View}^{\Pi}_C$ (consisting of the client's input, randomness, and protocol transcript) is computationally indistinguishable from the simulated view:
        \[
        \text{View}^{\Pi}_C {\approx}_{c} \mathsf{Sim}_C(A,\text{out})
        \]
        where $\text{out}$ denotes the encrypted inference output.

        \item \textbf{Corrupted Server:} There exists an efficient simulator $\mathsf{Sim}_S$ such that the server’s view $\text{View}^{\Pi}_S$ during execution of $\Pi$ is computationally indistinguishable from the simulated view:
        \[
        \text{View}^{\Pi}_S {\approx}_{c} \mathsf{Sim}_S(M)
        \]
    \end{itemize}
\end{itemize}
\end{definition}

\subsection{Security Proof of Pacifo}

Based on this definition, we state the following theorem regarding the security of Pacifo:

\begin{theorem}
Under the semantic security of the underlying RLWE-based FHE scheme, the proposed Pacifo protocol securely realizes the secure inference functionality against a static semi-honest adversary.
\end{theorem}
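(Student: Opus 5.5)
We prove the theorem with the standard simulation argument, treating the two corruption cases separately and checking correctness first. For \textbf{correctness}, we follow Algorithm~\ref{Algorithm:inference}. The client computes $\mathbf{z}_{\textnormal{pub}}=\mathcal{F}_{\textnormal{pub}}(\mathbf{X})$ in plaintext and encrypts it. The server then runs Algorithms~\ref{Algorithm:ConvBN}, \ref{Algorithm:ReLU} and \ref{Algorithm:residualblock}, which use only $\Pi_{\textnormal{Add}}$, $\Pi_{\textnormal{Mul}}$, $\Pi_{\textnormal{Rot}}$ and $\Pi_{\textnormal{Boot}}$. By the homomorphic correctness of these primitives, taken as black boxes in Section~\ref{subsec:HE}, decryption of $\hat{y}$ yields $\mathcal{F}_{\textnormal{class}}\circ\mathcal{F}_{\textnormal{priv}}(\mathbf{z}_{\textnormal{pub}})$. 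Here $\mathcal{F}_{\textnormal{priv}}$ means the fine-tuned network with the quadratic ReLU of (\ref{approx:relu}) and BatchNorm folded into the convolutions, and $M$ is taken to be exactly this deployed model. The equality then holds up to the CKKS approximation error. We will state this caveat explicitly and interpret ``always equal'' relative to the deployed HE-friendly model and the scheme's precision.

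\textbf{Corrupted server.} The server's view consists of its model $M$, its randomness, the public/evaluation keys (relinearization, rotation, bootstrapping keys), and the single incoming ciphertext $\hat{z}_{\textnormal{pub}}$. Every later value it holds is a deterministic function of these. We define $\mathsf{Sim}_S(M)$ as follows: sample a fresh key pair, output the public and evaluation keys, and output $\Pi_{\textnormal{Enc}}(\mathbf{0})$ in place of $\hat{z}_{\textnormal{pub}}$. Indistinguishability follows by a one-step hybrid from IND-CPA security of RLWE-based CKKS. Any distinguisher between the real and simulated views yields an adversary that distinguishes $\Pi_{\textnormal{Enc}}(\mathbf{z}_{\textnormal{pub}})$ from $\Pi_{\textnormal{Enc}}(\mathbf{0})$ given the public key, because the reduction can itself run the server's homomorphic evaluation.

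This is the step I expect to be the main obstacle. IND-CPA must also hold when evaluation keys are published. That requires the circular-security assumption on rotation and bootstrapping keys, which are encryptions of functions of the secret key. I would state it as part of ``semantic security of the underlying RLWE-based FHE scheme,'' as is standard in CKKS-based inference work. I would also note that the parameters of Table~\ref{Tab:HE_parameters} give $\lambda$-bit hardness, so the distinguishing advantage is negligible in $\lambda$.

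\textbf{Corrupted client.} The client's view is its input $A=\mathbf{X}$, its randomness and secret key, $\hat{z}_{\textnormal{pub}}$ (which it computed itself), and the returned ciphertext $\hat{y}$. The public layers are open-source, so they reveal nothing about $M$. We set $\mathsf{Sim}_C(A,\text{out})$ to recompute the client's own messages from $A$ and output $\text{out}$ as the received ciphertext. This is trivially identical to the real view under the definition, since $\text{out}$ is the encrypted output.

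To argue model privacy meaningfully, we should add that $\hat{y}$ decrypts only to $M(A)$ plus evaluation noise. The natural fix is to have the server add a fresh encryption of zero with large smudging noise before returning $\hat{y}$ (noise flooding). Then $\hat{y}$ is statistically close to a fresh encryption of $M(A)$, and the simulator can produce it from $M(A)$ alone. I would present this as a remark. The theorem as defined needs only the trivial simulation, and combining the two cases completes the proof.
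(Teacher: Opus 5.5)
Your correctness and corrupted-server arguments follow the paper's route and are tighter versions of it. The paper also replaces the client's ciphertext by an encryption of dummy data (a random $\hat{A}$ rather than $\mathbf{0}$, which makes no difference) and invokes semantic security. You add three things it lacks. You put the relinearization, rotation and bootstrapping keys into the server's view and name the circular-security assumption they need. Your $\mathsf{Sim}_S$ uses only $M$, whereas the paper's simulator also forwards $\mathsf{Enc}_C(M(A))$ ``from the ideal functionality,'' although the server computes that itself. Your caveat that correctness holds only up to CKKS error, relative to the deployed quadratic-ReLU model, is warranted: the paper's own plaintext and ciphertext accuracies differ.

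In the corrupted-client case you depart from the paper, and your description of the view is the faithful one. The paper says the client sees only encryptions $\{\mathsf{Enc}_S(W_i)\}$ of model parameters under a server key, simulated by $\mathsf{Enc}_S(0)$. No such messages occur in Algorithm~\ref{Algorithm:inference}: the client receives only $\hat{y}$, under its own key, and decrypts it. However, your claim that the trivial simulator is ``trivially identical'' fails, because the view contains the client's random tape and hence its secret key. If $\mathsf{Sim}_C$ samples these afresh, the supplied $\text{out}$ does not decrypt under the simulated key, and a distinguisher can check this immediately. If instead $\mathsf{Sim}_C$ is handed the real tape along with the real ciphertext, the simulation proves nothing.

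A meaningful simulator must be given the plaintext output $M(A)$ and generate $\hat{y}$ itself. That works only if the decryption error of $\hat{y}$ is hidden, and after approximate CKKS evaluation and bootstrapping that error can depend on $M$, not just on $M(A)$. Hiding it is exactly the noise-flooding (circuit-privacy) step you relegate to a remark. It is a modification of Pacifo, not an aside, so for the protocol as specified neither your argument nor the paper's establishes the model-privacy half of the theorem. With flooding added, your simulator goes through: sample keys, encrypt $\mathcal{F}_{\textnormal{pub}}(A)$, and output a flooded fresh encryption of $M(A)$.
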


\begin{proof}
\noindent\textbf{Correctness:}  The correctness of Pacifo directly follows from the correctness of the underlying homomorphic operations. Specifically, due to the correctness of RLWE-based FHE computations, Pacifo always produces the correct inference result in encrypted form, ensuring the protocol satisfies correctness.

\noindent\textbf{Corrupted Client.}  
	The client’s view in the protocol execution consists solely of ciphertexts encrypting the server’s model parameters, denoted as $\{\mathsf{Enc}_S(W_i)\}$. We construct an efficient simulator $\mathsf{Sim}_C$ as follows:
	
	\begin{itemize}
		\item $\mathsf{Sim}_C$ generates ciphertexts encrypting zero ($\mathsf{Enc}_S(0)$) with the same size and structure as real ciphertexts and outputs these simulated ciphertexts to the corrupted client.
	\end{itemize}
	
	Due to the semantic security of the underlying RLWE encryption scheme, ciphertexts encrypting the actual model parameters are computationally indistinguishable from ciphertexts encrypting zeros. Thus, the real client view and simulated view are computationally indistinguishable:
	
	\[
	\{\mathsf{Enc}_S(W_i)\} {\approx}_{c} \{\mathsf{Enc}_S(0)\}
	\]
	
	Hence, security against a corrupted client directly reduces to the semantic security assumption of the underlying RLWE encryption.
	
	\noindent\textbf{Corrupted Server.}  
	The server’s view consists of ciphertexts encrypting the client’s input data $\mathsf{Enc}_C(A)$ and the encrypted inference result $\mathsf{Enc}_C(M(A))$. We construct an efficient simulator $\mathsf{Sim}_S$ as follows:
	
	\begin{itemize}
		\item $\mathsf{Sim}_S$ randomly samples plaintext data $\hat{A}$, encrypts it to obtain simulated ciphertexts $\mathsf{Enc}_C(\hat{A})$, and provides these ciphertexts to the corrupted server.
		\item After receiving the encrypted output $\mathsf{Enc}_C(M(A))$ from the ideal inference functionality, the simulator forwards this ciphertext to the corrupted server.
	\end{itemize}
	
	Given the semantic security of the RLWE-based encryption scheme, ciphertexts encrypting random plaintexts ($\mathsf{Enc}_C(\hat{A})$) are computationally indistinguishable from ciphertexts encrypting the actual client input ($\mathsf{Enc}_C(A)$). Thus, we have:
	
	\[
	\{\mathsf{Enc}_C(\hat{A}), \mathsf{Enc}_C(M(A))\} {\approx}_{c} \{\mathsf{Enc}_C(A), \mathsf{Enc}_C(M(A))\}
	\]
	
	Therefore, the server's view is indistinguishable from the simulated view, ensuring security against a corrupted server.

Since both corrupted views are indistinguishable from their simulated counterparts, Pacifo is secure under the static semi-honest adversary model, relying directly on the semantic security of the underlying RLWE-based FHE scheme.
\end{proof}

\section{Experimental Results}\label{Sec:experiments}

\subsection{Implementation}
We implement Pacifo in C++, utilizing the OpenFHE library~\cite{OpenFHE} for CKKS, and use HEXL~\cite{boemer2021intel} to accelerate OpenFHE on an Intel(R) Xeon(R) Gold 5218R CPU. Initially, we follow the HE standard and employ $128$-bit security. For the ResNet20 layer (1 downsampling block + 2 residual blocks) circuit, we set the polynomial degree to $N=2^{16}$, ciphertext modulus as $1756$ bits, and multiplicative depth to $L=24$ for sufficient computation depth and available bootstrappings.
When freezing most blocks, for a more efficient block circuit, we choose $N=2^{15}$, $797$ bits and $L=17$ for higher efficiency. The first modulus size $q_0$ is configured to be $50$ bits, whereas the scaling modulus sizes $\{q_i \mid i > 1\}$ are set to $46$ bits.

\subsection{Models and Datsets}\label{subsec:models_and_datsets}
We use a standard ResNet-20 and pretrain it on CIFAR-100 for $200$ epochs to learn general representations. We then freeze the first six and eight residual blocks, respectively, to instantiate $\textnormal{Pacifo}_{\{\textnormal{L}\}}$ and $\textnormal{Pacifo}_{\{\textnormal{B}\}}$. Intuitively, compared with $\textnormal{Pacifo}_{\{\textnormal{B}\}}$, $\textnormal{Pacifo}_{\{\textnormal{L}\}}$ retains more trainable parameters and thus higher learning capacity, but it incurs greater computational cost during inference as more parameters must be kept private. Finally, we employ KD using a fully trained ResNet-20 teacher on the target downstream dataset (MNIST or CIFAR-10) to supervise $\textnormal{Pacifo}_{\{\textnormal{L}\}}$ and $\textnormal{Pacifo}_{\{\textnormal{B}\}}$, respectively.

\subsection{Micro-benchmark Evaluation of ReLU}\label{subsec:microbenchmarks}




\textbf{ReLU Approximations.} Recall that a $2$-degree polynomial is used for our ReLU approximations; here we further investigate its computational efficiency and influence on accuracy.
Table~\ref{Tab:relu} compares computational efficiency for different ReLU approximations in the context of FHE. It is worth noting that our ReLU approximation consumes only $2$ circuit depths, much less than the other studies~\cite{lee2022privacy,lee2022low,kim2023optimized,rovida2024encrypted}.

\begin{table}[h]
  \caption{Comparison of the computational efficiency for different ReLU approximations in FHE.}
  \label{Tab:relu}
  \centering
  \resizebox{\linewidth}{!}{ 
  \begin{tabular}{l|lcccccccc}
    \toprule
    \textbf{System}  &\textbf{Method} &\textbf{Interval} &\textbf{Degree} &\textbf{Depth}   \\\midrule
     Lee~\emph{et~al.}~\cite{lee2022privacy}&Remez.~\cite{lee2021high} &[-1,1] &50 &12\\   
     Lee~\emph{et~al.}~\cite{lee2022low}&Remez.~\cite{lee2021high} &[-1,1] &50 &12\\ 
     Kim~\emph{et~al.}~\cite{kim2023optimized}&Lee.~\cite{lee2022optimization} &[-1,1] &28 &10\\ 
     Rovida~\emph{et~al.}~\cite{rovida2024encrypted} &Chebyshev.~\cite{hernandez2001chebyshev} &[-1,1]&59 &5\\\hline
    \textbf{Ours}&Remez.~\cite{remez1934determination} &[-1,1] &\textbf{2}&\textbf{2}\\
     \bottomrule
  \end{tabular}
}
\end{table}
We have also evaluated the impact of our ReLU approximations on model accuracy, and the results are shown in Table~\ref{Tab:relu_accuracy}. By leveraging layer freezing and knowledge distillation techniques, we observed that the accuracy degradation remains within only $0.01\%$ and $0.02\%$ on MNIST and CIFAR-10 datasets, respectively. These results demonstrate that our ReLU approximation not only preserves model accuracy but also significantly enhances computational efficiency. 

\begin{table}[h]
  \caption{Comparison of model accuracy with ReLU and approximated ReLU activations on different datasets.}
  \label{Tab:relu_accuracy}
  \centering
  \resizebox{\linewidth}{!}{ 
  \begin{tabular}{l|l|cccccccc}
    \toprule
    \textbf{Dataset} &\textbf{Model}   &\textbf{Acc. (ReLU)} &\textbf{Acc. (Approx)}&\textbf{Loss$\downarrow$}   \\\midrule
     \multirow{2}{*}{MNIST}& $\textnormal{ResNet20}_{\textnormal{frozen-6}}$ &99.59\% &99.60\% &-0.01\%\\
    & $\textnormal{ResNet20}_{\textnormal{frozen-8}}$ &99.40\% &99.39\% &0.01\%\\\midrule
        \multirow{2}{*}{CIFAR-10}& $\textnormal{ResNet20}_{\textnormal{frozen-6}}$ &92.31\% &92.58\% &-0.27\%\\
    & $\textnormal{ResNet20}_{\textnormal{frozen-8}}$ &88.11\%&88.14\% &-0.03\%\\
     \bottomrule
  \end{tabular}
}
\end{table}


\subsection{Macro-benchmark Evaluations}\label{subsec:macrobenchmarks}

\textbf{End-to-End Performance.} 
We first present a summarization of the most recent works compared with our proposed method in Table~\ref{Tab:end_to_end_fhe}. It is obvious that $\textnormal{Pacifo}_{\{\textnormal{B}\}}$ achieves the best runtime performance, outperforms Kim~\emph{et~al.}~\cite{kim2023optimized} and Rovida~\emph{et~al.}~\cite{rovida2024encrypted} by a magnitude of $21.7\times$ and $25.2\times$, with only a limited accuracy degradation of $2\%$ compared with Kim~\emph{et~al.}~\cite{kim2023optimized}. Additionally, $\textnormal{Pacifo}_{\{\textnormal{L}\}}$ with more private parameters could achieve the highest accuracy of $92.58\%$ and is still $2.7\times$ and $3.1\times$ faster than Kim~\emph{et~al.}~\cite{kim2023optimized} and Rovida~\emph{et~al.}~\cite{rovida2024encrypted}, respectively. In the following, we will provide a more detailed analysis of Pacifo, including end-to-end performance, communication overhead, accuracy, etc.

\begin{table*}[h]
\centering
\resizebox{\linewidth}{!}{
\begin{threeparttable}
  \caption{Comparison of the inference performance for HE-based ResNet20 on CIFAR-10.}
  \label{Tab:end_to_end_fhe}
  \begin{tabular}{l|ccccccccc}
    \toprule
    \textbf{Framework} & \textbf{Scheme} & \textbf{\#Private Params} & \textbf{Processor} & \textbf{\#Threads} & \textbf{Runtime (s)} & \textbf{Accuracy} \\
    \hline
    Lee~\emph{et~al.}~\cite{lee2022privacy} & CKKS & 298.8K & Xeon Platinum 8280 CPU & 64 & 10602 & 90.67\%\\
    Lee~\emph{et~al.}~\cite{lee2022low} & CKKS & 278.3K & Threadripper PRO 3995WX@2.096GHz & 1 & 2271 & 91.31\%\\
    Benamira~\emph{et~al.}~\cite{benamira2023tt} & TFHE & 278.3K & i7-8650U@1.9GHz & 4 & 570 & 74.10\%\\
    Kim~\emph{et~al.}~\cite{kim2023optimized} & CKKS & 278.3K & EYPC 7402P@2.8GHz & 4 & 368 & 90.32\%\\
    Rovida~\emph{et~al.}~\cite{rovida2024encrypted} & CKKS & 278.3K & Xeon Gold 5218R CPU@2.1GHz & 8 & 426.5 & 91.67\%\\
    \hline
    $\textnormal{Pacifo}_{\{\textnormal{L}\}}$ (Ours) & CKKS & 206.3K & Xeon Gold 5218R CPU@2.1GHz & 8 & 135.1 & \textbf{\boldmath 92.58\%}$^{\dag}$\\
    $\textnormal{Pacifo}_{\{\textnormal{B}\}}$ (Ours) & CKKS & 74.6K & Xeon Gold 5218R CPU@2.1GHz & 8 & \textbf{16.9} & 88.19\%$^{\dag}$\\
    \bottomrule
  \end{tabular}
  \begin{tablenotes}
    \small
    \item[$\dag$] The results are evaluated on the full CIFAR-10 validation set (10,000 images), whereas some others use only subsets.
  \end{tablenotes}
\end{threeparttable}
}
\end{table*}

\textbf{Performance Breakdown.} 
Table~\ref{Tab:breakdown_fhe} provides a detailed runtime breakdown of each operation within a basic residual block of $\textnormal{Pacifo}_{\{\textnormal{L}\}}$. Notably, ``Conv+BatchNorm'' and ``bootstrapping'' are the most time-consuming components, requiring 9.39s and 7.97s, respectively, which correspond to $50.75\%$ and $43.07\%$ of the total runtime. Additionally, we observe that the optimized ReLU implementation contributes less than $0.05\%$ to the overall execution time, demonstrating its computational efficiency.


\begin{table}[ht]
  \caption{End-to-end runtime breakdown of a basic block in layer 3 of $\textnormal{Pacifo}_{\{\textnormal{L}\}}$.}
  \label{Tab:breakdown_fhe}
  \centering
  \resizebox{.98\linewidth}{!}{ 
  \begin{tabular}{cccccccccc}
    \toprule
    \textbf{Operation}  & \textbf{Available Depth}  &\textbf{Time (s)} &\textbf{\%} \\
    \midrule
    Conv+BatchNorm &2$\rightarrow$1  &4.60 &24.86\% \\
    Bootstrapping &1$\rightarrow$4  &4.11 & 22.21\% \\
    ReLU    &4$\rightarrow$2  &0.23 &0.01\% \\
    Conv+BatchNorm &2$\rightarrow$1  &4.79 &25.89\% \\
    Residual Connection &1$\rightarrow$1  &0.12 &0.01\% \\
    Bootstrapping &1$\rightarrow$4 &3.86 &20.86\% \\
    ReLU    &4$\rightarrow$2  &0.36 &0.02\% \\\hline
    Total   &-  &18.50 &- \\
    \bottomrule
  \end{tabular}
}
\end{table}

\textbf{Communication Overhead.} Here, we present the communication cost of Pacifo. Recall that we set the ring dimension into $N=2^{15}$, the number of available slots in one ciphertext will be $N'=N/2=16384$. Figure~\ref{fig:comm_cost} presents the dimension changes of an input image ($32 \times 32 \times 3$) as it passes through the ResNet20 model, and we can see that the maximum output size is $16384$ in layer 1. Hence, we are allowed to pack the output of an arbitrary block into a single ciphertext, significantly decreasing the communication cost. Since Pacifo is non-interactive, the client only needs to transmit one ciphertext, and the server returns the final inference result in one ciphertext. Specifically, in the case where $N=2^{15}$, the communication overhead for $\textnormal{Pacifo}_{\{\textnormal{L}\}}$ is roughly $11.5$MB, i.e., $10.5$MB (client)$+1.0$MB (server). For $\textnormal{Pacifo}_{\{\textnormal{B}\}}$, the overall communication cost is reduced to $10.5$MB, as the output size of Layer 3.1 is 4096, requiring fewer slots than in $\textnormal{Pacifo}{{\textnormal{L}}}$ (8192).

\begin{figure}[t]
\centering
\includegraphics[width=1.0\linewidth]{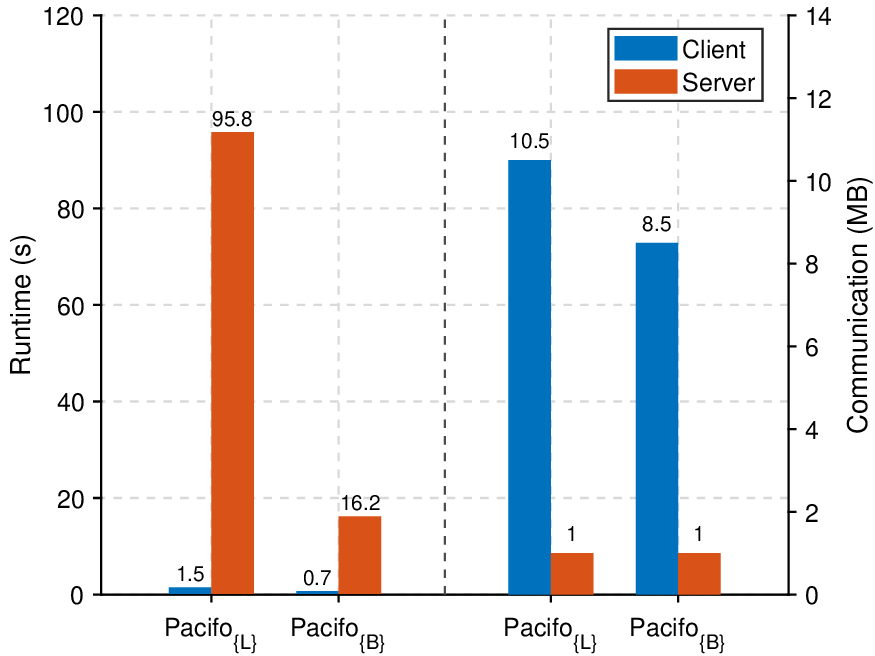}
    \caption{Flattened output size of different block in ResNet20 and the corresponding FHE ciphertext size (MB).} 
    \label{fig:comm_cost}
\end{figure}


\textbf{Accuracy and Precision.}
We evaluate the accuracy performance of Pacifo on three datasets: MNIST, CIFAR-10 and CIFAR-100. Table~\ref{Tab:accuracy_fhe} summarizes the inference accuracy of Pacifo when evaluated in the context of plaintext and ciphertext. Specifically, Pacifo in ciphertext demonstrates comparable accuracy performance to that of the plaintext. 

\begin{table}[h]
  \caption{Inference accuracy on different datasets.}
  \label{Tab:accuracy_fhe}
  \centering
  \resizebox{\linewidth}{!}{ 
  \begin{tabular}{l|lcccccccc}
    \toprule
    \textbf{System}  &\textbf{Dataset} &\textbf{Plaintext} &\textbf{Ciphertext} &\textbf{Loss$\downarrow$} \\\midrule
    \multirow{3}{*}{$\textnormal{Pacifo}_{\{\textnormal{L}\}}$} 
     &MNIST &99.65\%&99.63\% &0.02\%\\
     &CIFAR10 &92.58\% &92.58\% &0.00\%\\\midrule
    \multirow{3}{*}{$\textnormal{Pacifo}_{\{\textnormal{B}\}}$} 
     &MNIST &99.39\% &99.40\% &-0.01\% \\
     &CIFAR10 &88.14\% &88.19\% &-0.05\% \\
     \bottomrule
  \end{tabular}
}
\end{table}

In addition, to further investigate the influence of FHE noise on model accuracy, we follow Rovida~\emph{et~al.}~\cite{rovida2024encrypted} and introduce the concept of relative precision, quantifying the accuracy of an approximation relative to the correct values. We compute the value of relative precision ($P$) as follows:
\begin{align} 
P = 1 - \frac{\frac{1}{n} \sum_{i=1}^{n} |c_i - a_i|}{\frac{1}{n} \sum_{i=1}^{n} |c_i|} \end{align}
where $c_i$ and $a_i$ represent the correct and approximate values, respectively, and $n$ is the total number of values. A higher precision value indicates a closer approximation to the ground truth. Fig.~\ref{fig:precision} presents the evaluated relative precision results throughout the inference process. As noise accumulates, a gradual decline in $P$ is observed. However, $P$ always remains above $0.98$, ensuring that the ciphertext model maintains precision comparable to the plaintext one. This also illustrates why the accuracy degradation in Table~\ref{Tab:accuracy_fhe} is limited.

\begin{figure}[t]
\centering
\includegraphics[width=0.9\linewidth]{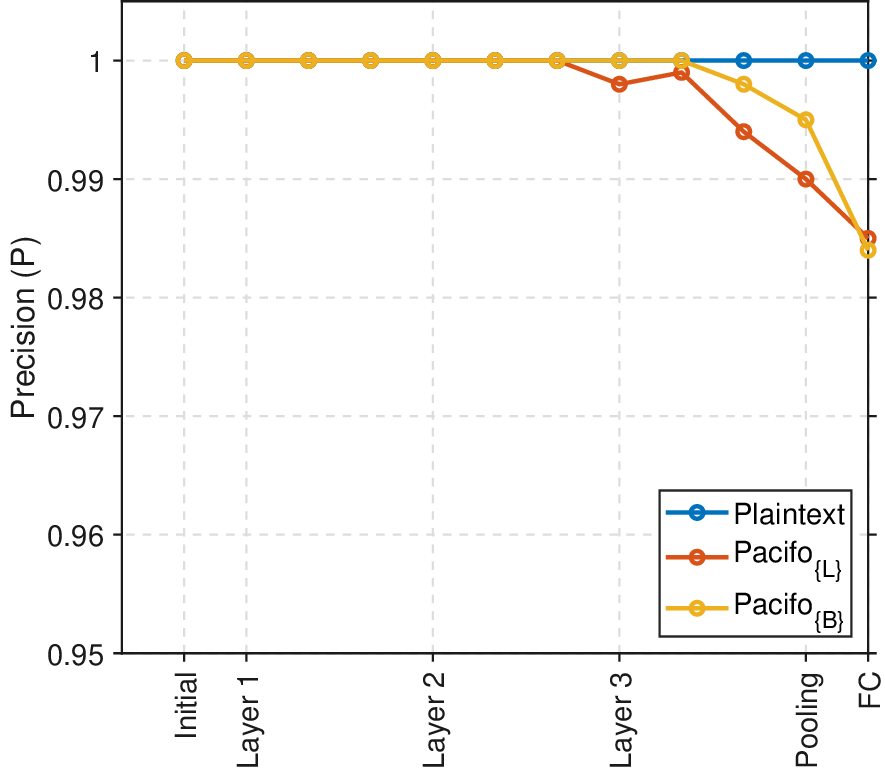}
    \caption{Comparision of relative precision of Pacifo with plaintext model.} 
    \label{fig:precision}
\end{figure}


\subsection{Scalability on Tuning HE Parameters.} 

In this subsection, we explore how to adjust FHE parameters to balance security and efficiency. While previous experiments maintain a $128$-bit security level for fair comparisons with existing works, we further analyze how different security levels can be tailored to specific applications.

Highly sensitive information, such as financial transactions or government communications, demands higher security (e.g., $128$-bit or above) to defend against computational attacks, including those posed by quantum threats in future. Based on our discussion in Section~\ref{Sec:security_standards}, we adopt two additional security levels, i.e., $192$-bit and $256$-bit, and compare the end-to-end runtime performance of $\textnormal{Pacifo}_{\{\textnormal{L}\}}$ in Table~\ref{Tab:Effe_performance}.

\begin{table}[h]
  \caption{Runtime performance of $\textnormal{Pacifo}_{\{\textnormal{L}\}}$ under different security levels.}
  \label{Tab:Effe_performance}
  \centering
  \resizebox{\linewidth}{!}{ 
  \begin{tabular}{c|cc|c}
    \toprule
    \textbf{Security level} &  \textbf{Ring size} & \textbf{Moduli chain size} & \textbf{Runtime (s)}  \\\midrule
     $128$-bit & $2^{16}$ &$1228$ bits &$135.1$  \\
     $192$-bit & $2^{16}$ &$1188$ bits &$149.7$  \\
     $256$-bit & $2^{17}$ &$1289$ bits &$256.5$ \\
     \bottomrule
  \end{tabular}
}
\end{table}

\textbf{192-bit Security.}
To achieve a $192$-bit security, we could matain the ring dimension $N$ as $2^{16}$, but slightly decrease the moduli chain size into below $1200$ bits. The change of moduli chain size will not significantly affect the computation complexity, and thus we can obtain an end-to-end runtime of $149.7$ s, only slightly higher than the time of $135.1$ s under standard $128$-bit security.

\textbf{256-bit Security.} To achieve a $256$-bit security, we have to raise $N$ into $2^{17}$ to strengthen the lattice hardness assumptions, and the moduli chain size is also expanded correspondingly. The change of $N$ substantially raise the overall computational overhead. As a result, the end-to-end runtime increases to $256.5$~s, representing nearly a $1.9\times$ slowdown compared to the $128$-bit configuration.


\section{Conclusion}\label{Sec:conclusion}
In this paper, we proposed Pacifo, an efficient FHE-based private inference framework for MLaaS. By combining layer partitioning, lightweight ReLU approximation, and flexible security configurations, Pacifo significantly reduces computational overhead while preserving model accuracy and privacy. Experimental results show up to $21.7\times$ speedup over existing solutions with minimal accuracy loss, along with strong scalability to higher security levels. These results demonstrate that Pacifo makes privacy-preserving deep learning inference more practical and adaptable for real-world applications.


%
\bibliographystyle{IEEEtran}
\bibliography{ref}

\appendices




\ifCLASSOPTIONcaptionsoff
  \newpage
\fi

\end{document}